
\ifx\onecol\undefined
\documentclass[journal,twocolumn]{IEEEtran}
\else 
\documentclass[12pt,draftclsnofoot,journal,onecolumn]{IEEEtran}
\fi

\usepackage{amsmath,amsfonts}
\usepackage{algorithmic}
\usepackage{algorithm}
\usepackage{array}
\usepackage{color}
\usepackage{textcomp}
\usepackage{stfloats}
\usepackage{url}
\usepackage{verbatim}
\usepackage{graphicx}
\graphicspath{{Figures/}}
\usepackage{cite}
\usepackage{amsthm}
\usepackage{gensymb}
\usepackage{bm}
\usepackage{setspace}
\hyphenation{op-tical net-works semi-conduc-tor IEEE-Xplore}
\usepackage{hyperref}
\usepackage{subfigure}
\usepackage{makecell}
\hypersetup{hidelinks, 
colorlinks=true,
allcolors=black,
pdfstartview=Fit,
breaklinks=true}
\usepackage{arydshln}


\newtheorem{proposition}{\bf Proposition}

\begin{document}

\title{Coded Beam Training for RIS Assisted Wireless Communications}

\author{Yuhao Chen, \textit{Graduate Student Member, IEEE} and Linglong Dai, \textit{Fellow, IEEE}
\thanks{This paper was supported by National Key Research and Development Program of China (Grant No. 2023YFB3811503). \textit{(Corresponding author: Linglong Dai.)}}
\thanks{The authors are with the Department of Electronic Engineering, Tsinghua University, Beijing 100084, China, and also with the Beijing National Research Center for Information Science and Technology (BNRist), Beijing 100084, China. (e-mails: \href{mailto:chen-yh21@mails.tsinghua.edu.cn}{chen-yh21@mails.tsinghua.edu.cn}; \href{mailto:daill@tsinghua.edu.cn}{daill@tsinghua.edu.cn}).}
}

\maketitle

\begin{abstract}

Reconfigurable intelligent surface (RIS) is considered as one of the key technologies for future 6G communications. To fully unleash the performance of RIS, accurate channel state information (CSI) is crucial. Beam training is widely utilized to acquire the CSI. However, before aligning the beam correctly to establish stable connections, the signal-to-noise ratio (SNR) at UE is inevitably low, which reduces the beam training accuracy. To deal with this problem, we exploit the coded beam training framework for RIS systems, which leverages the error correction capability of channel coding to improve the beam training accuracy under low SNR. Specifically, we first extend the coded beam training framework to RIS systems by decoupling the base station-RIS channel and the RIS-user channel. For this framework, codewords that accurately steer to multiple angles is essential for fully unleashing the error correction capability. In order to realize effective codeword design in RIS systems, we then propose a new codeword design criterion, based on which we propose a relaxed Gerchberg-Saxton (GS) based codeword design scheme by considering the constant modulus constraints of RIS elements. In addition, considering the two dimensional structure of RIS, we further propose a dimension reduced encoder design scheme, which can not only guarentee a better beam shape, but also enable a stronger error correction capability. Simulation results reveal that the proposed scheme can realize effective and accurate beam training in low SNR scenarios.

\end{abstract}

\begin{IEEEkeywords}
	RIS, beam training, channel coding, codeword design.
\end{IEEEkeywords}

\vspace{-10pt}

\section{Introduction}
Reconfigurable intelligent surface (RIS) is considered as a promising technology for future 6G wireless communications~\cite{basar2019wireless}. Thanks to the numerous low-cost reflecting elements, RIS can control the electromagnetic environment intelligently with low power consumption~\cite{di2020smart,wu2019towards,pan2021reconfigurable}. By properly controlling the phase shifts of RIS elements, directional beams with high array gain could be generated by beamforming to extend the signal coverage and improve the channel capacity~\cite{zhou2020framework,zhi2022power}. In order to realize effective beamforming so as to fully leverage the potential benefits of RIS, accurate channel state information (CSI) is essential~\cite{zheng2022survey,alwazani2020intelligent}.

The CSI can be obtained by either explicit channel estimation or implicit beam training. For explicit channel estimation, since the elements on RIS can only reflect the incident signals, the base station (BS) needs to estimate the cascaded channel (the composite of user-RIS channel and RIS-BS channel)~\cite{wang2020channel,wei2021channel}. 
For example, authors in~\cite{an2021low} focused on the low-complexity uplink channel estimation in RIS systems and considered the discrete phase shifts on RIS. The impact of different configurations of RIS on the channel estimation accuracy was also analyzed. Furthermore, authors in~\cite{an2022codebook} proposed a codebook-based solution for RIS systems. By selecting proper codebook during uplink channel estimation, the proposed framework could be adaptive to different pilot overhead constraints and has reduced error propagation, control signaling, and computational complexity. The above two works both focused on direct estimation of the superposed channel in RIS systems and neglect the information in the environment. To solve this problem, authors in~\cite{yu2024environment} proposed an environment-aware codebook generation scheme. By generating a group of channels based on the statistical CSI, the configuration at RIS during uplink channel estimation can be optimized to facilitate the channel estimation. Other environment information such as the geometric relationship between RIS and UE was also exploited in~\cite{chen2024channel} to improve the channel estimation accuracy. However, for explicit channel estimation, the size of the cascaded channel is the product of the number of BS antennas and the number of RIS elements. With the large number of RIS elements needed to generated high-gain beams, the size of cascaded channel is usually large, leading to an unacceptable pilot overhead. To avoid estimating the large cascaded channel matrix, the implicit beam training, which only aims to determine the angles of RIS and user equipment (UE), can be utilized. By searching the space with a series of pre-defined codewords, the angles of RIS and UE can be estimated based on the received power, according to which the beams at BS and RIS can be correctly aligned to UE.
\vspace{-10pt}
\subsection{Prior Works}
\vspace{-3pt}

To determine the angles of RIS and UE, the most intuitive scheme is the exhaustive beam training~\cite{suh2016construction,chen2023accurate}. During beam training, BS and RIS both generate narrow beams to sequentially search all possible angles in space. After transmitting all candidate beams, the angles of RIS and UE can be obtained by selecting the beams with the maximum received power. For this scheme, both BS and RIS generate high-gain narrow beams, so the angles can be accurately estimated. However, for this scheme, the number of candidate beams equals to the product of the number of BS antennas and the number of RIS elements. With the large number of BS antennas and RIS elements, there is a very large number of candidate beams, resulting in an overwhelming beam training overhead.

In order to reduce the beam training overhead, researchers have developed various low overhead hierarchical beam training schemes. Existing works can be divided into two categories. The first category is single user hierarchical beam training, where the beam training overhead can be reduced by excluding a large range of impossible angles so as to narrow down the searching range effectively. 

For example, in~\cite{wei2022codebook}, beams generated by lower-layer codewords cover a wider range of angles compared with beams generated by higher-layer codewords. During beam training, lower-layer codewords are firstly transmitted. After transmitting the codewords in a certain layer, the index of the beam with the maximum received power is fed back to the BS and RIS, and the codewords in the next layer are decided accordingly. As the layer grows higher, the searching range gradually narrows down and the angular resolution increases continuously. After the highest-layer search, the angles of RIS and UE are then determined. By this means, a large range of wrong angles is excluded in lower-layer search, which avoids a lot of unnecessary high-resolution search and thus reduces the beam training overhead. However, since the higher-layer codewords is determined by the result of lower-layer training, this category of schemes requires frequent feedback between UE and BS/RIS, which brings extra burden to the RIS systems. Moreover, since the searching range in a certain layer may vary for different UEs, it is hard to extend this category of schemes to multi-user systems, which severely limits the application of such schemes. 

To realize effective beam training in multi-user systems, for the second category, all possible angles in space are divided into several disjoint subsets, and the beam generated by each codeword covers the angles in a certain subset simultaneously. After scanning the entire space by these codewords, the subset of the beam with the maximum received power is recorded. Then, the space is divided in a different way and the corresponding scanning and recording are conducted again. After a few rounds of scanning, the angles of RIS and multiple UEs can be determined independently based on the intersections of all recorded subsets. 

Following this idea, researchers have studied several effective beam training schemes~\cite{wang2021fast,you2020fast,xu2024low,wang2023hierarchical}. Specifically, authors in~\cite{wang2021fast,you2020fast} divided the whole space for every round of scanning in a random/hashed way. Authors in~\cite{xu2024low} further extend this hashing scheme to multi-RIS scenarios. By assigning different hashing functions to different RISs, the angles of different RISs can be simultaneously determined. For the above schemes, the choice of hashing function may affect the beam training accuracy, thus leading to an unstable performance. To deal with this, authors in~\cite{wang2023hierarchical} studied a full-coverage hierarchical beam training scheme. Unlike the single user hierarchical beam training of the first category, in each layer, the beams also cover the whole space, but the angular resolution gradually increases. For this category of schemes, each beam can search multiple angles simultaneously, so the beam training overhead can also be reduced. In addition, since how to divide the entire space in each round does not depend on previous results, no extra feedback is needed during beam training, which makes this category of schemes more adaptive to different communication scenarios than the first category. 

Unfortunately, before aligning the beam correctly to establish stable connections among BS, RIS and UE, the signal-to-noise ratio (SNR) at UE is inevitably low. What's worse, in RIS systems, there exists the “multiplicative fading” effect~\cite{zhang2022active,zhi2022active}, which means the equivalent path loss of the BS-RIS-UE link is the product of the path loss of BS-RIS link and the path loss of RIS-UE link. Meanwhile, both categories of low overhead beam training schemes need to generate beams that cover a large range of angles, leading to a relatively low beamforming gain. These facts will result in a significantly low SNR at UE. As a result, the codeword may be mischosen, which leads to the “error propagation” phenomenon and greatly reduces the beam training accuracy. Therefore, how to realize accurate beam training in RIS systems under poor SNR conditions is crucial for the practical deployment of RIS in future communications.

\subsection{Our Contributions}

To improve the beam training accuracy under poor SNR scenarios, in this paper, we exploit the coded beam training framework in RIS systems. By applying the idea of channel coding in the beam training process, we can leverage the error correction capability of channel coding to enhance the reliability of beam training under low SNR\footnote{Simulation codes will be provided to reproduce the results in this article: \href{http://oa.ee.tsinghua.edu.cn/dailinglong/publications/publications.html}{http://oa.ee.tsinghua.edu.cn/dailinglong/publications/publications.html}.}. The specific contributions are listed as follows.

\begin{itemize}
	\item First, inspired by the coded beam training framework that is recently studied by us in multiple-input multiple-output (MIMO) systems~\cite{zheng2024coded}, we design a coded beam training framework for RIS systems. Specifically, we map the angles in space to different beam patterns in space through the encoding function. Based on the intended beam patterns, we design the codewords, which is the foundation of the designed framework. After sequentially transmitting all codewords in the codebook, the UE can obtain the encoded transmitting sequence based on the received powers. Then, the decoding function is utilized to decode the received sequence and the angles of RIS and UE can be estimated. Thanks to the error correction capability of the encoding-decoding process, the error caused by low SNR during beam training can be corrected and the beam training accuracy improves accordingly.

	\item The most significant difference between the codeword design of RIS and the codeword design for the coded beam training framework in~\cite{zheng2024coded} is that, RIS is subject to constant modulus constraint, which makes it hard to generate ideal beams that cover a variety of angles. One of the efficient codeword design schemes is Gerchberg-Saxton (GS) based codeword design scheme~\cite{lu2023hierarchical}. To adapt the GS-based scheme to RIS, we first clarify that the criterion of minimizing the difference between the intended beam shape and the generated beam shape is actually unsuitable for beam training. Then, we propose that the criterion for codeword design should be distinguishing between the angles within the intended angle coverage range and the angles out of the angle coverage range. Based on this new criterion, we propose a relaxed GS-based codeword design scheme so as to improve the beam shape accuracy.
	
	\item Apart from the constant modulus constraint, the structure of RIS is usually a 2-dimensional (2D) uniform planar array (UPA), which is also different from the uniform linear array (ULA) considered in~\cite{zheng2024coded}. The 2D structure leads to a poor orthogonality for the spatial sampling matrix in the proposed relaxed GS-based codeword design scheme, which also results in a non-ideal beam shape. To deal with this problem, we further propose a dimension reduced encoder design scheme. By decoupling the 2D codeword design problem into two 1D codeword design problems, the spatial sampling matrix degenerates to the 1D case and possesses a good orthogonality, thus improving the quality of the beam shape design. Moreover, since the encoder decouples the two dimensions of RIS, the error correction capability can also be improved. Then, we compare the necessary beam training overheads of the proposed framework and existing frameworks. Finally, simulation results reveal that the proposed framework can realize efficient beam training in low SNR scenarios.

\end{itemize}

\subsection{Organization and Notation}
\subsubsection*{Organization}
The remainder of this paper is organized as follows. In Section II, we first introduce the system model. Then, the traditional exhaustive beam training framework and hierarchical beam training framework are elaborated. In Section III, we introduce the proposed coded beam training framework in RIS systems. In Section IV, we first introduce the proposed relaxed GS-based codeword design scheme, followed by the proposed dimension reduced encoder design scheme. Then, the necessary beam training overheads for the proposed framework and traditional frameworks are analyzed. Simulation results are provided in Section V, and conclusions are finally drawn in Section VI.

\subsubsection*{Notation}
Lower-case and upper-case boldface letters represent vectors and matrices, respectively; $\mathbf{v}\left(i\right)$ denotes the $i$-th element of the vector $\mathbf{v}$; $\mathbf{X}\left(i, j\right)$ denotes the $(i, j)$-th element of the matrix $\mathbf{X}$; $\mathbf{X}\left(i, :\right)$ and $\mathbf{X}\left(:, j\right)$ denote the $i$-th row and the $j$-th column of the matrix $\mathbf{X}$; $(\cdot)^T$ and $(\cdot)^H$ denote the transpose and conjugate transpose, respectively; $\left|\cdot\right|$ denotes the absolute operator; $\left\lVert\cdot\right\rVert _2$ denotes the $l_2$ norm operator; $\lceil \cdot \rceil $ denotes the ceiling operator; $\mathrm{mod}(\cdot)$ denotes the modulo operator; $\mathcal{CN}(\mu, \Sigma)$ and $\mathcal{U}(a, b)$ denote the Gaussian distribution with mean $\mu$ and covariance $\Sigma$, and the uniform distribution between $a$ and $b$, respectively.

\section{System Model and Background} 

\ifx\onecol\undefined
	\begin{figure*}[t!]
		\centering
		\includegraphics[width=1\linewidth]{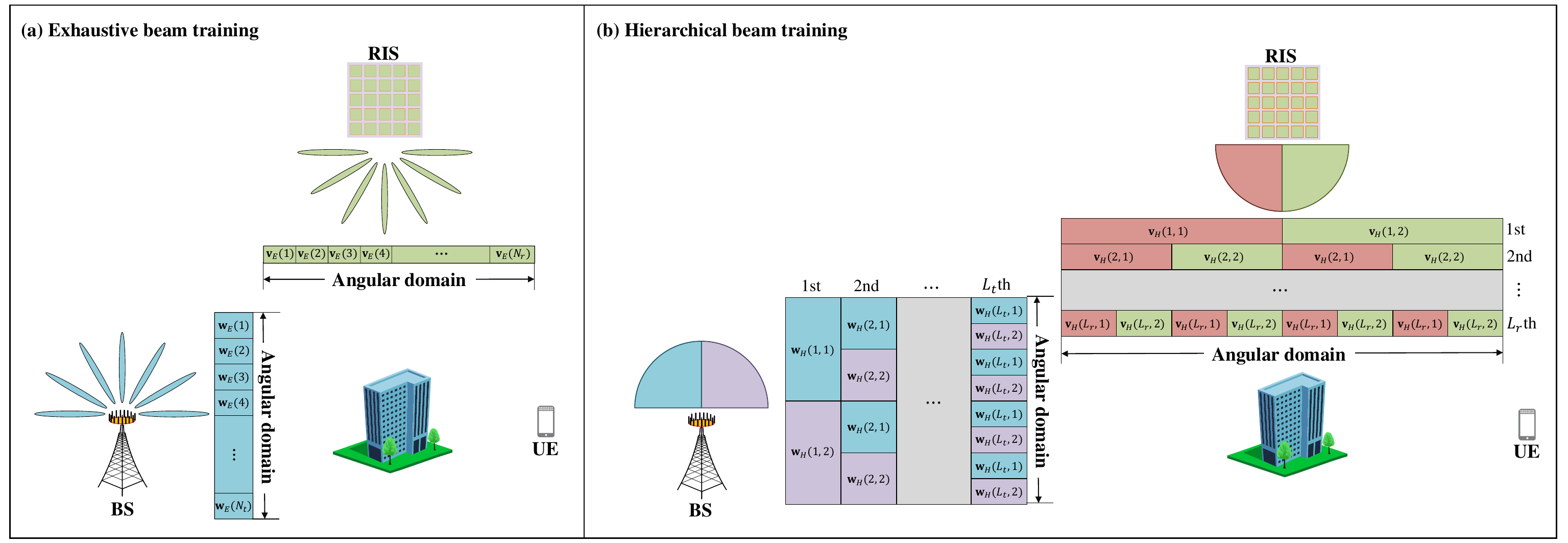}
		\caption{Traditional beam training frameworks. (a) Exhaustive beam training; (b) Hierarchical beam training.}
		\label{fig:tradition}
	\end{figure*}
\fi

In this section, we first introduce the system model of RIS assisted communication systems. Then, the traditional exhaustive and hierarchical beam training frameworks are reviewed.

\subsection{System Model}

We consider a downlink time division duplexing (TDD) RIS assisted communication system in this paper. The BS employs a uniform linear array (ULA) with $N_t$ antennas and the RIS employs a UPA with $N_{r_1}\times N_{r_2}=N_r$ antennas. The UE is equipped with a single antenna. We assume that the direct links between the BS and the UE are blocked by obstacles such as trees or buildings~\cite{wei2021channel}. Then, the received signal $y_p\in\mathbb{C}$ in the $p$-th time slot at the UE can be represented as 
\begin{equation}
	y_p=\mathbf{h}_r\mathrm{diag}(\mathbf{v}_p)\mathbf{G}\mathbf{w}_ps_p+n_p,
	\label{eq:yp_signal}
\end{equation}
where $\mathbf{h}_r\in\mathbb{C}^{1\times N_r}$ denotes the channel between UE and RIS; $\mathbf{v}_p\in\mathbb{C}^{N_r\times 1}$ denotes the reflecting vector of RIS at the $p$-th time slot; $\mathbf{G}\in\mathbb{C}^{N_r\times N_t}$ denotes the channel between RIS and BS; $\mathbf{w}_p$ denotes the beamforming vector of BS at the $p$-th time slot; $s_p\in\mathbb{C}$ denotes the signal sent by BS at the $p$-th time slot; and $n_p\sim\mathcal{CN}(0, \sigma^2)$ denotes the additive white Gaussian complex noise at the $p$-th time slot with $\sigma^2$ being the noise power, respectively. Due to the constant modulus constraint, RIS can only adjust the phase shift rather than the amplitude coefficient~\cite{dai2020reconfigurable}. As a result, the reflecting vector of the RIS can be re-written as $\mathbf{v}_p=\left[e^{j\vartheta_1}, e^{j\vartheta_2}, \cdots, e^{j\vartheta_{N_r}}\right]$, where $\vartheta_n\in\left[0, 2\pi\right], n=1, 2, \cdots, N_r$ represents the phase shift of the $n$-th element.

For the channel model, we apply a geometric channel model~\cite{alkhateeb2014channel}, which assumes each scatterer only contributes a single propagation path, so the channel $\mathbf{h}_r$ between UE and RIS can be written as 
\begin{equation}
	\mathbf{h}_r=\sqrt{\frac{N_r}{L_r}}\sum_{\ell=1}^{L_r}\alpha_\ell^r\mathbf{a}^T(\phi_\ell^r, \theta_\ell^r),
	\label{eq:hr_channel}
\end{equation}
where $L_r$ denotes the number of paths between UE and RIS; $\alpha_\ell^r=\frac{\lambda}{4\pi R_\ell^r}$ denotes the path gain of the $\ell$-th path with $R_\ell^r$ being the   distance of the $\ell$-th path; $\phi_\ell^r, \theta_\ell^r$ denote the azimuth angle and the elevation angle at RIS, respectively. The steering vector $\mathbf{a}(\phi, \theta)$ for a $N=N_1\times N_2$-antenna UPA can be elaborated as 
\begin{equation}
	\mathbf{a}(\phi, \theta)=\frac{1}{\sqrt{N}}\left[e^{-j2\pi d\sin(\phi)\sin(\theta)\bm{\delta}/\lambda}\right]\otimes\left[e^{-j2\pi d\cos(\theta)\bm{\varsigma}\lambda}\right],
	\label{eq:a_vector}
\end{equation}
where $\lambda=c/f_c$ denotes the wavelength of electromagnetic wave with $f_c$ being the central frequency and $c$ being the speed of light. The antenna spacing $d$ is set to $d=\lambda/2$. The antenna indices $\bm{\delta}$ and $\bm{\varsigma}$ can be represented as 
\begin{equation}
	\begin{aligned}
		\bm{\delta}=\left[\delta_1, \delta_2, \cdots, \delta_{N_1}\right]^T=\left[\tfrac{1-N_1}{2}, \tfrac{3-N_1}{2}, \cdots, \tfrac{N_1-1}{2}\right]^T\\
		\bm{\varsigma}=\left[\varsigma_1, \varsigma_2, \cdots, \varsigma_{N_2}\right]^T=\left[\tfrac{1-N_2}{2}, \tfrac{3-N_2}{2}, \cdots, \tfrac{N_2-1}{2}\right]^T
	\end{aligned}.
	\label{eq:index}
\end{equation}
Simularly, the channel $\mathbf{G}$ between RIS and BS can be represented as 
\begin{equation}
	\mathbf{G}=\sqrt{\frac{N_tN_r}{L_G}}\sum_{\ell=1}^{L_G}\alpha_\ell^G\mathbf{a}(\phi_\ell^{G_r}, \theta_\ell^{G_r})\mathbf{b}^T(\phi_\ell^{G_t}),
	\label{eq:G_channel}
\end{equation}
where $L_G$ denotes the number of paths between RIS and BS; $\alpha_\ell^G=\frac{\lambda}{4\pi R_\ell^G}$ denotes the path gain of the $\ell$-th path with $R_\ell^G$ being the transmission distance of the $\ell$-th path; $\phi_\ell^{G_r}, \theta_\ell^{G_r}, \phi_\ell^{G_t}$ denote the azimuth angle at RIS, the elevation angle at RIS and the azimuth angle at BS, respectively. The steering vector $\mathbf{b}(\phi)$ for a $N$-antenna ULA can be elaborated as 
\begin{equation}
	\mathbf{b}(\phi)=\frac{1}{\sqrt{N}}\left[1, e^{-j2\pi d\sin(\phi)/\lambda}, \cdots, e^{-j2(N-1)\pi d\sin(\phi)/\lambda}\right]^T.
	\label{eq:b_vector}
\end{equation}
Due to the severe loss incurred by the scattering, high-frequency communication heavily relies on the line-of-sight (LoS) path~\cite{han2021hybrid}, so we set $L_r=L_G=1$ in this paper, and~\eqref{eq:hr_channel}~\eqref{eq:G_channel} can be further written as
\begin{equation}
	\mathbf{h}_r=\sqrt{N_r}\alpha^r\mathbf{a}^T(\phi^r, \theta^r),
\end{equation}
\begin{equation}
	\mathbf{G}=\sqrt{N_tN_r}\alpha^G\mathbf{a}(\phi^{G_r}, \theta^{G_r})\mathbf{b}^T(\phi^{G_t}).
\end{equation}
This means that the channel is determined by the angle at BS and the angle at RIS.

\subsection{Traditional Beam Training Frameworks}\label{sec:2.2}


To determine the angle at BS and the angle at RIS, beam training is usually applied. By generating directional beams to search all angles in space, the above angles can be obtained according to the beam tuple with the maximum received power. Here, we introduce two types of traditional beam training frameworks in RIS assisted communication systems: exhaustive beam training and hierarchical beam training.

\subsubsection{Exhaustive Beam Training}
One intuitive way to estimate the angles is to exhaustively search all possible angles in space. As illustrated in Fig.~\ref{fig:tradition}(a), both BS and RIS apply codewords in exhaustive codebooks to generate narrow beams and sequentially search all possible angles in space. The codewords at the BS side and the RIS side are denoted as $\mathbf{w}_E(i), i=1, 2, \cdots, N_t$ and $\mathbf{v}_E(j), j=1, 2, \cdots, N_r$, respectively. After receiving and recording received powers from all beam tuples, the angle at BS and the angle at RIS are estimated according to the beam tuple with the maximum received power. Since narrow beams are applied in the exhaustive beam training framework, the codebook size is equal to the number of antenna elements at both sides. In our considered scenario, where BS is equipped with $N_t$ antenna elements and RIS is equipped with $N_r$ antenna elements, the necessary beam training overhead is $N_tN_r$. In future communication systems, the antenna number at both BS and RIS tend to be very large, which means the exhaustive beam training framework will suffer from an unacceptable beam training overhead.

\subsubsection{Hierarchical Beam Training}
In order to reduce the beam training overhead, we can apply the idea of hierarchical beam training framework in RIS assisted communication systems~\cite{wang2023hierarchical}. As illustrated in Fig.~\ref{fig:tradition}(b), both BS and RIS apply binary search based codebooks, so each layer contains two codewords. We denote the $i$-th codeword in the $j$-th layer at the BS side and the RIS side as $\mathbf{w}_H(j, i)$ and $\mathbf{v}_H(j, i)$, respectively. According to the property of binary search, the numbers codebook layers at the BS side and the RIS side are $k_t=\log_2(N_t)$ and $k_r=\log_2(N_r)$, respectively. To gradually narrow down the possible range of UE, the beam patterns of codewords in higher layers possess higher angular resolutions compared to codewords in lower layers. 

During the beam training procedure, the codewords are transmitted layer by layer. Specifically, at the first layer, the BS and RIS sequentially transmits four beam tuples to UE, which can be listed as $\left\lbrace\mathbf{w}_H(1, 1), \mathbf{v}_H(1, 1)\right\rbrace$, $\left\lbrace\mathbf{w}_H(1, 1), \mathbf{v}_H(1, 2)\right\rbrace$, $\left\lbrace\mathbf{w}_H(1, 2), \mathbf{v}_H(1, 1)\right\rbrace$ and $\left\lbrace\mathbf{w}_H(1, 2), \mathbf{v}_H(1, 2)\right\rbrace$. We set $\mathbf{u}\in\left\lbrace\left\lbrace0, 0\right\rbrace, \left\lbrace0, 1\right\rbrace, \left\lbrace1, 0\right\rbrace, \left\lbrace1, 1\right\rbrace\right\rbrace^{\max\left\lbrace\log_2(N_t), \log_2(N_r)\right\rbrace}$ as the tuple vector which describes the angle at BS and the angle at RIS. Then, we set $\mathbf{u}(1)=\left\lbrace0, 0\right\rbrace$ if the received power of beam tuple $\left\lbrace\mathbf{w}_H(1, 1), \mathbf{v}_H(1, 1)\right\rbrace$ is the maximum, $\mathbf{u}(1)=\left\lbrace0, 1\right\rbrace$ if the received power of beam tuple $\left\lbrace\mathbf{w}_H(1, 1), \mathbf{v}_H(1, 2)\right\rbrace$ is the maximum, $\mathbf{u}(1)=\left\lbrace1, 0\right\rbrace$ if the received power of beam tuple $\left\lbrace\mathbf{w}_H(1, 2), \mathbf{v}_H(1, 1)\right\rbrace$ is the maximum and $\mathbf{u}(1)=\left\lbrace1, 1\right\rbrace$ if the received power of beam tuple $\left\lbrace\mathbf{w}_H(1, 2), \mathbf{v}_H(1, 2)\right\rbrace$ is the maximum. After transmitting the beam tuples in all layers, the BS can decide the angle at BS and the angle at RIS based on the tuple vector $\mathbf{u}$. We take the first bit of each element in $\mathbf{u}$ as $\mathbf{u}_t$ and the second bit of each element in $\mathbf{u}$ as $\mathbf{u}_r$. The indices of the angle at BS and the angle at RIS can then be derived as $\mathrm{bin2dec}(\mathbf{u}_t)$ and $\mathrm{bin2dec}(\mathbf{u}_r)$, where $\mathrm{bin2dec}(\cdot)$ denotes the operation of transforming a binary number to a decimal number.

By searching the entire space layer by layer, we can exclude many incorrect angles without searching them in an exhaustive way and thus greatly improve the beam training efficiency. The overall beam training overhead is $4\max\left\lbrace k_t, k_r\right\rbrace=4\max\left\lbrace\log_2(N_t), \log_2(N_r)\right\rbrace$, which is far less than $N_tN_r$. However, during the binary hierarchical beam training, we need to generate wider beams than those in exhaustive beam training frameworks, so the beam gains are much smaller than those of narrow beams. In addition, in RIS assisted communication systems, there exists multiplicative fading effect~\cite{zhang2022active,zhi2022active}, which means that the equivalent path loss of the BS-RIS-UE link is the product of the path loss of BS-RIS link and the path loss of RIS-UE link. These two factors lead to a reduced SNR at UE during beam training, and thus severely limit the beam training accuracy in RIS assisted communication systems.

\section{Coded Beam Training Framework in RIS Systems}\label{sec:3}

To enhance the ability to realize accurate beam training under poor SNR conditions, we design a coded beam training framework for RIS assisted communication systems, which is inspired by the coded beam training framework for MIMO in~\cite{zheng2024coded}. By applying the idea of channel coding and adding redundant beam training pilots, the accidental error caused by random noise during the beam training can be corrected without feedback. 

Different from the scenario considered in~\cite{zheng2024coded}, in RIS assisted communication systems, we need to estimate the angles both at BS and at RIS, so the best beam tuple, rather than the best beam, needs to be determined. Specifically, for BS, there are $N_t$ candidate angles, and we need $k_t=\log_2(N_t)$ information bits to determine the angle at BS. Similarly, RIS has $N_r$ candidate angles, and we need $k_r=\log_2(N_r)$ information bits\footnote{Here, both $N_t$ and $N_r$ are not necessarily in the form of $(2^n)$. When not so, $k_t=\left\lceil\log_2(N_t)\right\rceil$ and $k_r=\left\lceil\log_2(N_r)\right\rceil$, where $\left\lceil\cdot\right\rceil$ is the ceiling operation.} to determine the angle at RIS. Similar to the hierarchical beam training framework in Section~\ref{sec:2.2}, the information bits at BS and RIS are $\mathbf{u}_t$ and $\mathbf{u}_r$, respectively. To leverage the error correction capability of channel coding, we need to encode the effective information bits by mapping the information bits $\mathbf{u}_t\in\left\lbrace0, 1\right\rbrace^{k_t}$ and $\mathbf{u}_r\in\left\lbrace0, 1\right\rbrace^{k_r}$ to codewords $\mathbf{x}_t\in\left\lbrace0, 1\right\rbrace^{n_t}$ and $\mathbf{x}_r\in\left\lbrace0, 1\right\rbrace^{n_r}$, where $k_t\leq n_t, k_r\leq n_r$. We denote the encoding function at BS and RIS as $f_t$ and $f_r$. Then, we have $\mathbf{x}_t=f_t(\mathbf{u}_t)$ and $\mathbf{x}_r=f_r(\mathbf{u}_r)$. It is worth noting that since we do not know the angle of UE before the beam training. Here, the encoding function is conducted to all possible $\mathbf{u}_t$ and $\mathbf{u}_r$. For a BS with $k_t=\log_2(N_t)$ elements, we need to encode all possible $N_t$ bitstreams and get the corresponding $\mathbf{x}_t$ for the next step. This is the same for RIS.

After encoding the information bits, we need to build the connection between the codewords $\mathbf{x}_t$ and $\mathbf{x}_r$ and the ideal beam pattern in space during beam training. We denote the candidate angle list at BS as $\bm{\Omega}_t\in\mathbb{R}^{N_t}$, which can be expressed as 
\begin{equation}
	\bm{\Omega}_t(n)=\sin^{-1}\left(-\frac{N_t+1}{N_t}+\frac{2n}{N_t}\right), n=1, 2, \cdots, N_t,
	\label{eq:Omega-t}
\end{equation}
and the candidate angle list at RIS as $\bm{\Omega}_r\in\mathbb{R}^{N_t\times2}$, which can be expressed as 
\ifx\onecol\undefined
\begin{equation}
	\begin{aligned}
		&\bm{\Omega}_r(n, 1)=\sin^{-1}\left[\left(-\frac{N_{r_1}+1}{N_{r_1}}+\frac{2\lceil\frac{n}{N_{r_1}}\rceil}{N_{r_1}}\right)/\sin(\bm{\Omega}_r(n, 2))\right]\\
		&\bm{\Omega}_r(n, 2)=\cos^{-1}\left(\frac{1-N_{r_2}}{N_{r_2}}+\frac{2\mathrm{mod}\left(n, N_{r_2}\right)}{N_{r_2}}\right)\\
		&\hspace{60mm}n=1, 2, \cdots, N_r,
	\end{aligned}
	\label{eq:Omega-r}
\end{equation}
\else 
\begin{equation}
	\begin{aligned}
		&\bm{\Omega}_r(n, 1)=\sin^{-1}\left[\left(-\frac{N_{r_1}+1}{N_{r_1}}+\frac{2\lceil\frac{n}{N_{r_1}}\rceil}{N_{r_1}}\right)/\sin(\bm{\Omega}_r(n, 2))\right]\\
		&\bm{\Omega}_r(n, 2)=\cos^{-1}\left(\frac{1-N_{r_2}}{N_{r_2}}+\frac{2\mathrm{mod}\left(n, N_{r_2}\right)}{N_{r_2}}\right)
	\end{aligned}, \qquad n=1, 2, \cdots, N_r,
	\label{eq:Omega-r}
\end{equation}
\fi
where $\bm{\Omega}_r(:, 1)$ denotes the azimuth angles and $\bm{\Omega}_r(:, 2)$ denotes the elevation angles. We denote the ideal beam pattern at BS and at RIS as $\mathcal{V}_t\in\left\lbrace0, 1\right\rbrace^{n_t\times N_t}$ and $\mathcal{V}_r\in\left\lbrace0, 1\right\rbrace^{n_r\times N_r}$. In our proposed beam training framework, we want the generated beam to cover a certain set of angles in space. For these angles, we set the corresponding elements in $\mathcal{V}_t$ or $\mathcal{V}_r$ as 1. For the set of angles that we do not want to cover by a certain beam, we set the corresponding elements in $\mathcal{V}_r$ or $\mathcal{V}_r$ as 0. Therefore, $\mathcal{V}_t$ and $\mathcal{V}_r$ can be obtained by
\begin{equation}
	\begin{aligned}
		\mathcal{V}_t(:, i)=\mathbf{x}_t^{(i)}, i=1, 2, \cdots, N_t\\
		\mathcal{V}_r(:, j)=\mathbf{x}_r^{(j)}, j=1, 2, \cdots, N_r
	\end{aligned},
	\label{eq:pattern}
\end{equation}
where $\mathbf{x}_t^{(i)}=f_t(\mathbf{u}_t^{(i)})$ and $\mathbf{x}_r^{(j)}=f_r(\mathbf{u}_r^{(j)})$. Here, $\mathbf{u}_t^{(i)}$ and $\mathbf{u}_r^{(j)}$ denote the information bits of different angle indices, which can be expressed as $\mathbf{u}_t^{(i)}=\mathrm{dec2bin}(i, k_t)$ and $\mathbf{u}_r^{(j)}=\mathrm{dec2bin}(j, k_r)$, where $\mathrm{dec2bin}(\cdot, \kappa)$ denotes the operation of transforming a decimal number to a binary number of $\kappa$ bits.

Based on the ideal beam pattern, the beam training codebook can be designed. We denote the beam training codebook at BS and at RIS as $\mathcal{C}_t\in\mathbb{C}^{n_t\times N_t\times2}$ and $\mathcal{C}_r\in\mathbb{C}^{n_r\times N_r\times2}$. Each layer has two codewords, and the beams generated by these two codewords should cover the entire space without overlapping. Specifically, for $\mathcal{C}_t$, at the $i$-th layer, the first codeword $\mathcal{C}_t(i, :, 1)$ should cover the angles $\bm{\Omega}_t(\varrho)$, where $\mathcal{V}_t(i, \varrho)=1$. On the contrary, the second codeword $\mathcal{C}_t(i, :, 2)$ should cover the angles $\bm{\Omega}_t(\varrho)$, where $\mathcal{V}_t(i, \varrho)=0$. For $\mathcal{C}_r$, at the $j$-th layer, the first codeword $\mathcal{C}_r(j, :, 1)$ should cover the angles $\bm{\Omega}_r(\varepsilon, :)$, where $\mathcal{V}_r(j, \varepsilon)=1$, while the second codeword $\mathcal{C}_r(j, :, 2)$ should cover the angles $\bm{\Omega}_r(\varepsilon)$, where $\mathcal{V}_r(j, \varepsilon)=0$. Here, $\varrho$ and $\varepsilon$ represent the angles within the intended coverage range of a certain codeword at BS and RIS, respectively.

After designing the beam training codebook, we can start the training process. Similar to the hierarchical beam training framework, at each layer (i.e., the $i$-th layer), BS and RIS sequentially transmit four beam tuples to UE, which can be listed as $\left\lbrace\mathcal{C}_t(i, :, 1), \mathcal{C}_r(i, :, 1)\right\rbrace$, $\left\lbrace\mathcal{C}_t(i, :, 1), \mathcal{C}_r(i, :, 2)\right\rbrace$, $\left\lbrace\mathcal{C}_t(i, :, 2), \mathcal{C}_r(i, :, 1)\right\rbrace$ and $\left\lbrace\mathcal{C}_t(i, :, 2), \mathcal{C}_r(i, :, 2)\right\rbrace$. Based on the received power of these four beam tuples, we obtain the received tuple vector $\hat{\mathbf{x}}$ similar to the hierarchical beam training. By seperating each tuple in $\hat{\mathbf{x}}$ into two bits, the received codewords corresponding to BS and RIS can be denoted as $\hat{\mathbf{x}}_t$ and $\hat{\mathbf{x}}_r$. Finally, the decoding function $g_t$ and $g_r$ can be applied to recover the original information bits by $\hat{\mathbf{u}}_t=g_t(\hat{\mathbf{x}}_t)$ and $\hat{\mathbf{u}}_r=g_r(\hat{\mathbf{x}}_r)$. Since we introduce redundant bits through encoding, the error in $\hat{\mathbf{x}}$ can be corrected, thus improving the beam training accuracy under poor SNR conditions.

\section{Codeword Design for the Proposed Coded Beam Training Framework}

The above framework endows the RIS system with the self-correction ability during beam training and will potentially improve the beam training accuracy. To fully unleash the performance of the proposed framework, the accurate beam shape design is essential. 

For BS, the codeword design is straightforward~\cite{qi2020hierarchical} . We can generate the multi-mainlobe codeword with a weighted summation of several array response vectors as
\begin{equation}
	\mathcal{C}=\sum_{\phi_i\in\tilde{\bm{\Omega}}_t}e^{j\psi_i}\mathbf{b}(\phi_i),
\end{equation}
where $\tilde{\bm{\Omega}}_t$ denotes the set of angles that the codeword needs to cover, and the auxiliary phase off-set $\psi_i$ can help guarantee a high gain within the intended angle range, which can be elaborated as $\psi_i=i\pi(-1+\frac{1}{N_t}), i=1, 2, \cdots, \left|\tilde{\bm{\Omega}}_t\right|$. The principle behind this choice of auxiliary phase off-set is that we want to maximize the beamforming gain at the transition angles between two adjacent narrow beams.

However, for RIS, to design a codeword that evenly covers a set of randomly distributed angles in space is not easy due to the constant modulus constraint of RIS elements. Therefore, in this section, we propose a relaxed Gerchberg-Saxton-based codeword design scheme and a dimension reduced encoder design scheme to approach the desired beam pattern at RIS from two aspects.

\subsection{Proposed Relaxed Gerchberg-Saxton-based Codeword Design Scheme}

There have been several works concerning the codeword design on RIS. For example, in~\cite{you2020fast}, authors generated multi-directional beams by dividing the RIS into several sub-arrays and let each sub-array point to a certain direction. For this scheme, there is a trade-off between the coverage range and the angular resolution. The more sub-arrays are divided, the larger range can be covered, but the angular resolution would decrease since the number of elements on each sub-array is small. For our proposed framework, the required beams need to cover half of the entire angle with a high angular resolution. The scheme in~\cite{you2020fast} cannot realize such a flexible and accurate coverage.

It is worth noting that for the beam training problem, the amplitude of the designed codeword at different angles greatly affects the beam training accuracy, while the specific phase is not that important. This characteristic makes the codeword design problem similar to the phase retrieval problem in the field of digital holography imaging. Gerchberg-Saxton (GS) algorithm is widely used to solve phase retrieval problem~\cite{gerchberg1972practical,bucci1990intersection}. By iteratively imposing the two amplitude measurements in the object plane and diffraction pattern plane, the phase information of the image can be recovered. Following the idea of GS algorithm, authors in~\cite{lu2023hierarchical} studied a GS-based codeword design scheme at BS by applying the power normalization to one updating process to satisfy the power constraint of the codeword. 

\begin{figure}[t!]
	\centering
	\subfigure{
		\includegraphics[width=8cm]{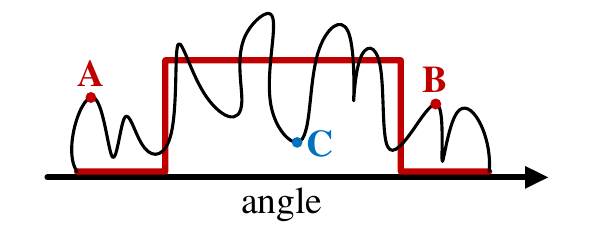}
		\label{fig:GS-origin}
	}\\

	(a)\\
	\subfigure{
		\includegraphics[width=8cm]{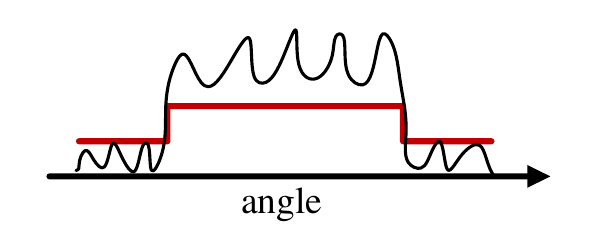}
		\label{fig:GS-relaxation}
	}\\

	(b)
	\caption{The designed beam shape when applying (a) GS-based codeword design scheme in~\cite{lu2023hierarchical}; (b) proposed relaxed GS-based codeword design scheme.}
	\label{fig:GS-compare}
\end{figure}

However, due to the constant modulus constraint at RIS, directly applying the codeword design scheme in~\cite{lu2023hierarchical} cannot yield an ideal beam shape and would possibly lead to some intrinsic errors even in high SNR scenarios. Specifically, as illustrated in Fig.~\ref{fig:GS-origin}, the red line represents the intended beam shape, the black line represents the generated beam shape, and the axis represents angles in space. We can see that due to the constant modulus constraint, the generated beam shape is not ideal and there exists oscillation. At angles out of the angle coverage range, we want the amplitude to be $0$ (i.e., point A and point B), but the amplitude is actually larger than some angles within the angle coverage range (i.e., point C). As a consequence, when UE is located in point A or point B, the received power of this codeword is actually high even though it should be near $0$. In this case, error chould happen even there is no noise in the system. The similar error also exists if we apply the scheme studied in~\cite{wang2023hierarchical}, where the beamforming configuration designed on RIS was modeled as a nonlinear optimization problem, and the problem was solved based on the Optimization Toolbox solver.

The root cause of this kind of error is that, the objective of the GS-based codeword design scheme in~\cite{lu2023hierarchical} is to minimize the difference between the intended beam shape and the generated beam shape, which can be formulated as 
\begin{equation}
	\begin{aligned}
		&\min\left\lVert\mathbf{A}^H\mathbf{v}-\mathbf{s}\right\rVert _2^2\\
		&\mathrm{s.t.} \quad \mathbf{v}(i)=e^{j\vartheta_i}, i=1, 2, \cdots, N_r,
	\end{aligned}
\end{equation}
where $\mathbf{A}\in\mathbb{C}^{N_r\times N_r}$ denotes the array response vectors at different angles, $\mathbf{v}\in\mathbb{C}^{N_r\times 1}$ denotes the designed codeword, and $\mathbf{s}\in\mathbb{C}^{N_r\times1}$ denotes the intended beam shape. Such an objective is reasonable in the realm of codeword design, but is not suitable for beam training. For beam training, the top priority is to distinguish between the angles within the angle coverage range and the angles out of the angle coverage range. To solve this problem. we propose a relaxed GS-based codeword design scheme. By relaxing the requirements of approaching the ideal beam pattern, we may not get the most similar beam shape, but we can distinguish the angles within the angle coverage range and the angles out of the angle coverage range more clearly. The procedure of the proposed relaxed GS-based codeword design scheme is elaborated in \textbf{Algorithm~\ref{alg:relax-GS}}.

\renewcommand{\algorithmicrequire}{\textbf{Input:}}
\renewcommand{\algorithmicensure}{\textbf{Output:}}
\begin{algorithm}[t!]
	\caption{Proposed relaxed GS-based codeword design scheme}
	\label{alg:relax-GS}
	\setstretch{1}
	\begin{algorithmic}[1]
		\REQUIRE $\tilde{\bm{\Omega}}_r$, $\mathbf{A}$, $K_{\mathrm{iter}}$, $\Delta$
		\ENSURE Designed codeword $\mathbf{v}$
		\STATE Initialize the intended beam shape $\mathbf{s}$ by~\eqref{eq:s-amp}
		\STATE Obtain the initial designed codeword $\hat{\mathbf{v}}_{(1)}$ by $\hat{\mathbf{v}}_{(1)}=\frac{1}{\sqrt{N_r}}e^{j\angle\left(\mathbf{A}^\dagger\mathbf{s}_{(0)}\right)}$
		\FOR {$k=1$ to $K_{\mathrm{iter}}$}
			\STATE Update $\mathbf{s}_{(k)}$ by~\eqref{eq:sk}
			\STATE Update $\Upsilon$ by~\eqref{eq:first}
			\STATE Update $\hat{\mathbf{s}}_{(k)}$ by~\eqref{eq:s-hat}
			\STATE Update $\hat{\mathbf{v}}_{(k+1)}$ by~\eqref{eq:v-hat}
		\ENDFOR
		\STATE Obtain the designed codeword $\mathbf{v}$ by $\mathbf{v}=\hat{\mathbf{v}}_{(K_{\mathrm{iter}}+1)}$
	\end{algorithmic}
\end{algorithm}

Here, $\tilde{\bm{\Omega}}_r$ denotes the angle coverage range of the intended beam shape, which is obtained based on $\mathcal{V}_r$. Matrix $\mathbf{A}$ transforms the codeword to the beam shape at the entire space, which can be expressed as 
\begin{equation}
	\mathbf{A}(:, n)=\mathbf{a}\left(\bm{\Omega}_r(n, 1), \bm{\Omega}_r(n, 2)\right).
\end{equation}
We denote the intended beam shape as 
\begin{equation}
	\mathbf{s}=\left[s(\phi_1, \theta_1), \cdots, s(\phi_1, \theta_{N_{r_2}}), \cdots, s(\phi_{N_{r_1}}, \theta_{N_{r_2}})\right],
	\label{eq:s-intend}
\end{equation} 
where $s(\phi, \theta)=\left|s(\phi, \theta)\right|e^{j\varphi(\phi, \theta)}$ with $\left|s(\phi, \theta)\right|$ being the intended amplitude and $\varphi(\phi, \theta)$ being the phase information. We hope angles within the angle coverage range can receive signals and angles out of the angle coverage range cannot receive signals, so the intended amplitude $\left|s(\phi, \theta)\right|$ should be 
\begin{equation}
	\left|s(\phi, \theta)\right|=\left\lbrace
	\begin{aligned}
		\mathcal{P} && \left(\phi, \theta\right)\in\tilde{\bm{\Omega}}_r\\
		0			&& \left(\phi, \theta\right)\notin\tilde{\bm{\Omega}}_r
	\end{aligned},
	\right.
	\label{eq:s-amp}
\end{equation}
where $\mathcal{P}$ is the constant decided by the codeword power. This equation means that we want the amplitudes of angles within the angle coverage range to be non-zero constants, and the amplitudes of angles out of the angle coverage range to be zero. At the beginning of the iteration, we initialize the intended beam shape $\mathbf{s}_{(0)}$ by~\eqref{eq:s-intend}, where the amplitude is generate by~\eqref{eq:s-amp} and the phase information $\varphi(\phi, \theta)$ is generated randomly. Based on $\mathbf{s}_{(0)}$, we initialize the designed codeword $\hat{\mathbf{v}}_{(1)}$ as $\hat{\mathbf{v}}_{(1)}=\frac{1}{\sqrt{N_r}}e^{j\angle\left(\mathbf{A}^\dagger\mathbf{s}_{(0)}\right)}$, where $\angle(\cdot)$ denotes the phase operator.

In the $k$-th round of iteration, we first calculate the beam shape realized by the designed codeword $\hat{\mathbf{v}}_{(k)}$ as 
\begin{equation}
	\mathbf{s}_{(k)}=\mathbf{A}^H\hat{\mathbf{v}}_{(k)}.
	\label{eq:sk}
\end{equation}
Here, the matrix $\mathbf{A}$ converts the codeword to the amplitudes corresponding to different angles. Then, different from the scheme in~\cite{lu2023hierarchical}, where the intended amplitude $\left|s(\phi, \theta)\right|$ is directly assigned to $\mathbf{s}_{(k)}$, we divide the points in $\mathbf{s}_{(k)}$ into two categories. For the first category, the corresponding amplitude can distinguish the angles within the angle coverage range from the angles out of the angle coverage range. The set of the points in the first category can be expressed as  
\begin{equation}
	\begin{aligned}
	\Upsilon=&\left\lbrace(\phi, \theta)\mid\left((\phi, \theta)\in\tilde{\bm{\Omega}}_r\:\&\&\:\mathbf{s}_{(k)}(\phi, \theta)\geq \mathcal{P}(1-\Delta)\right)\right.\\
	&\hspace{15mm}\left.||\left((\phi, \theta)\notin\tilde{\bm{\Omega}}_r\:\&\&\:\mathbf{s}_{(k)}(\phi, \theta)\leq \mathcal{P}\Delta\right)\right\rbrace,
	\end{aligned}
	\label{eq:first}
\end{equation}
where $\Delta\in\left[0, 0.5\right]$ is the dividing factor. For points in the first category, the amplitudes in this round have already revealed the difference between angles within the angle coverage range and angles out of the angle coverage range, so we relax the requirements on them by not assigning the exact amplitude in $\mathbf{s}$ to them. On the contrary, for the second category, where $(\phi, \theta)\notin\Upsilon$, we assign new amplitude to them, so $\hat{\mathbf{s}}_{(k)}$ can be expressed as 
\begin{equation}
	\begin{aligned}
		&\hspace{-2mm}\hat{\mathbf{s}}_{(k)}(\phi, \theta)=\\
		&\hspace{-1mm}\left\lbrace
		\begin{aligned}
			&\mathbf{s}_{(k)}(\phi, \theta)  &&(\phi, \theta)\in\Upsilon\\
			&\mathcal{P}(1-\Delta)e^{j\angle(\mathbf{s}_{(k)}(\phi, \theta))} && (\phi, \theta)\notin\Upsilon \:\&\&\: (\phi, \theta)\in\tilde{\bm{\Omega}}_r\\
			&\mathcal{P}\Delta e^{j\angle(\mathbf{s}_{(k)}(\phi, \theta))} && (\phi, \theta)\notin\Upsilon \:\&\&\: (\phi, \theta)\notin\tilde{\bm{\Omega}}_r
		\end{aligned}
		\right.
	\end{aligned}
	\label{eq:s-hat}
\end{equation}
This equation means for the first category, we do not assign a specific value to the amplitude. For the second category, we assign a value that can exactly make them distinguish from each other.
Based on $\hat{\mathbf{s}}_{(k)}$, the designed codeword for the next round of iteration can be obtained by 
\begin{equation}
	\hat{\mathbf{v}}_{(k+1)}=\frac{1}{\sqrt{N_r}}e^{j\angle\left(\mathbf{A}^\dagger\hat{\mathbf{s}}_{(k)}\right)}.
	\label{eq:v-hat}
\end{equation}
After $K_{\mathrm{iter}}$ rounds of iteration, the designed codeword $\mathbf{v}$ can finally be obtained as $\mathbf{v}=\hat{\mathbf{v}}_{(K_{\mathrm{iter}}+1)}$.

With the proposed relaxed GS-based codeword design scheme, we can generate beam shape like Fig.~\ref{fig:GS-relaxation}, where the angles within the angle coverage range and the angles out of the angle coverage range can be clearly distinguished.

\subsection{Proposed Dimension Reduced Encoder Design Scheme}\label{sec:4.2}

\begin{figure}[t!]
	\centering
	\subfigure{
		\includegraphics[width=4.1cm]{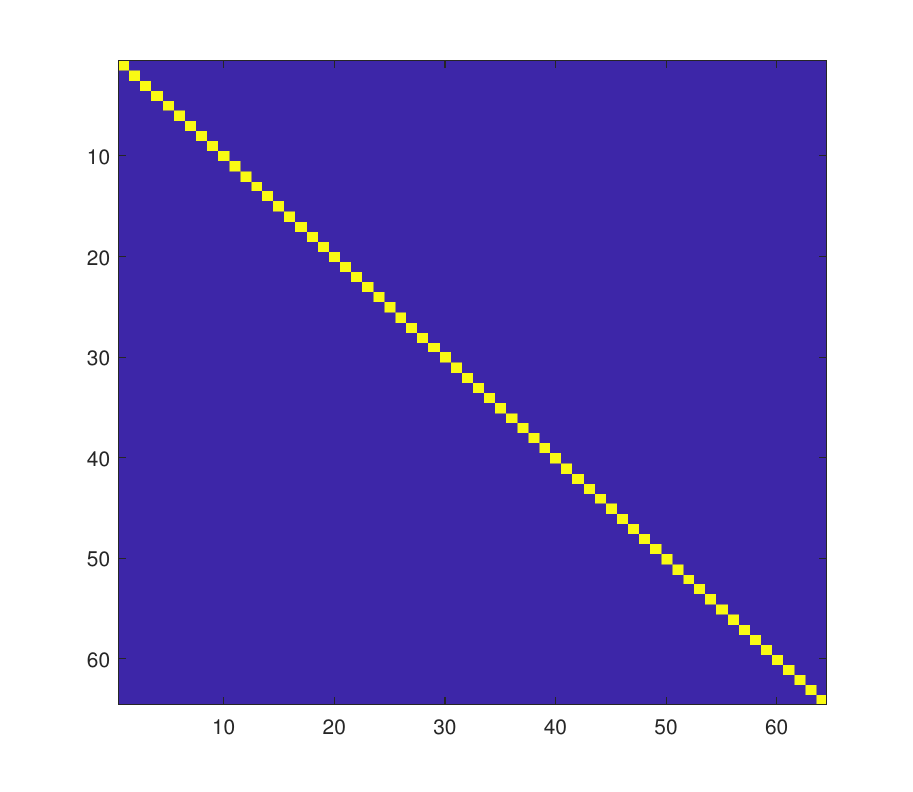}
		\label{fig:1D}
	}
	\subfigure{
		\includegraphics[width=4.1cm]{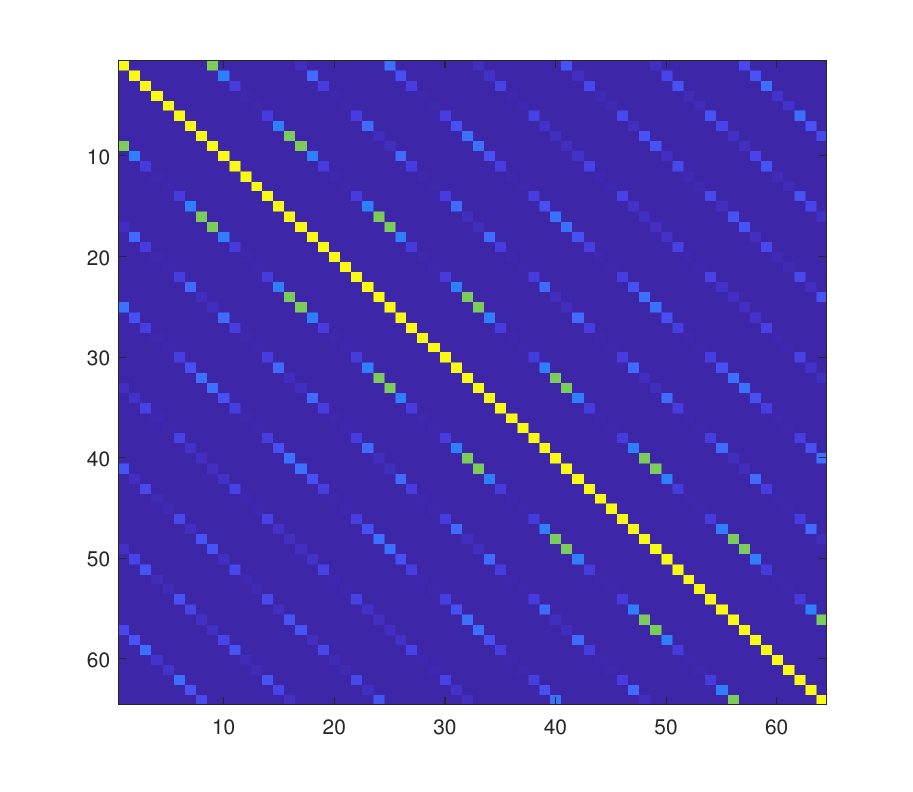}
		\label{fig:2D}
	}\\

	\quad(a)\qquad\qquad\qquad\qquad\qquad\qquad(b)
	\caption{The orthogonality of $\mathbf{A}$ when the size of RIS is (a) $64\times1$; (b) $8\times8$.}
	\label{fig:A-orthogonal}
\end{figure}

The above scheme works well for RIS with ULA, but for a RIS with UPA, the generated beam shape is still highly non-ideal. This is because the orthogonality of $\mathbf{A}$ in the two dimensional case is bad. As illustrated in Fig.~\ref{fig:A-orthogonal}, for the case where RIS is equipped with a $64\times1$ ULA, the orthogonality of $\mathbf{A}$ is good, while for case were RIS is equipped with a $8\times8$ UPA, the orthogonality of $\mathbf{A}$ is bad. Given the fact that RIS usually possesses a UPA structure to guarantee enough reflection area, how to generate good beam shape for a RIS with UPA is crucial to ensure a high beam training accuracy. Since the above scheme works well for RIS with ULA, can we decouple the 2D beam shape design problem into two 1D beam shape design problems? To enable this decoupling, the intended beam shape in space should be independent in the two dimensions. As discussed in Section~\ref{sec:3}, the intended beam shape is determined by the encoding function $f_r(\cdot)$. Therefore, we propose a dimension reduced encoder design scheme to decouple the 2D beam shape design problem into two 1D beam shape design problem so as to improve the beam training accuracy.

\begin{figure}[t!]
	\centering
	\includegraphics[width=0.9\linewidth]{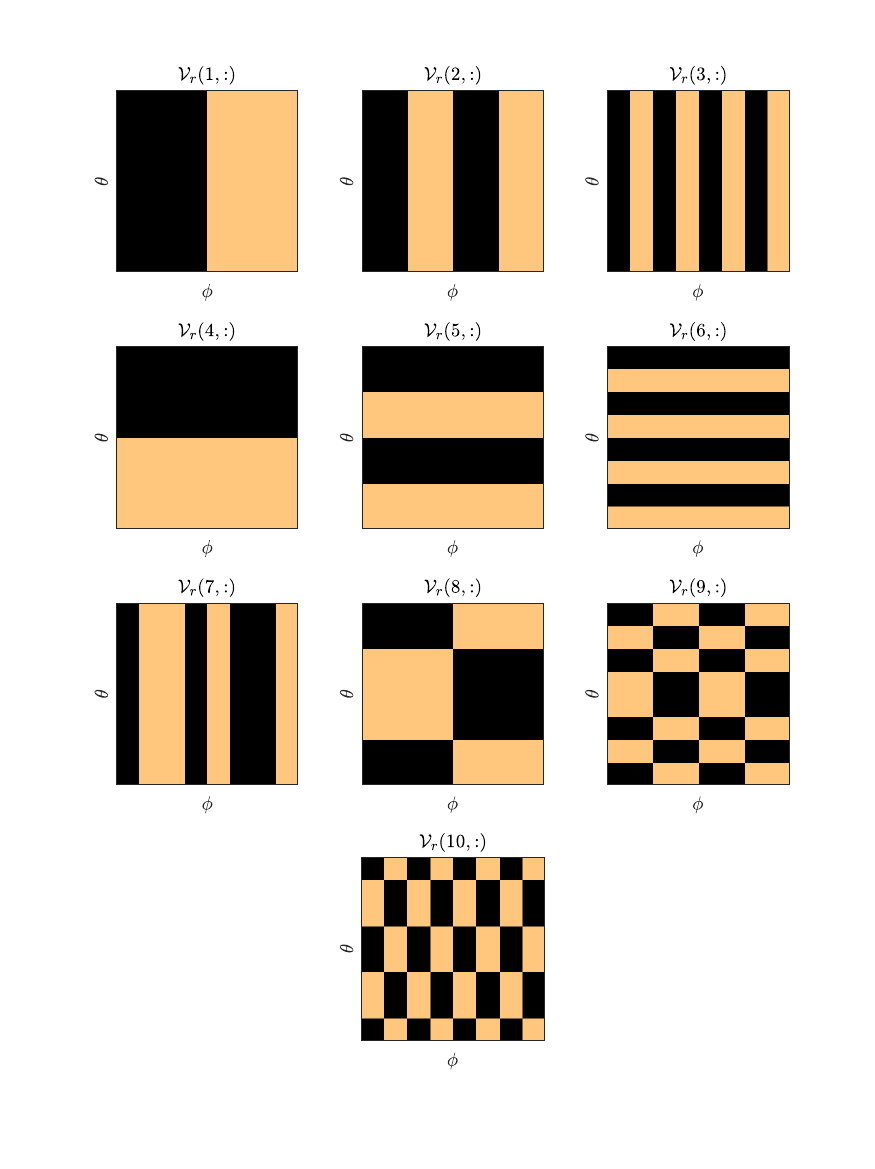}
	\vspace{-35pt}
	\caption{Angular coverage range corresponding to the ideal beam pattern $\mathcal{V}_r$.}
	\label{fig:G-origin}
\end{figure}


We choose the Hamming code as the coding scheme for our proposed coded beam training framework because of the high degree of freedom of Hamming code regarding the encoder design. Specifically, a $(n_r, k_r)$ Hamming code can encode a $k_r$-bit bitstream $\mathbf{u}_r^{(i)}$ into a $n_r$-bit codeword $\mathbf{x}_r^{(i)}$ with a generator matrix $\mathbf{G}_{\mathrm{Ham}}\in\left\lbrace0, 1\right\rbrace^{k_r\times n_r}$ by $\mathbf{x}_r^{(i)}=f_r(\mathbf{u}_r^{(i)})=\mathbf{u}_r^{(i)}\mathbf{G}_{\mathrm{Ham}}$. The generator matrix $\mathbf{G}_{\mathrm{Ham}}$ has the structure 
\begin{equation}
	\mathbf{G}_{\mathrm{Ham}}=
	\begin{bmatrix}
		\mathbf{I}_{k_r} & \mathbf{Q}	
	\end{bmatrix},
\end{equation}
where $\mathbf{I}_{k_r}$ denotes the $k_r\times k_r$ identical matrix. Submatrix $\mathbf{Q}\in\left\lbrace0, 1\right\rbrace^{k_r\times(n_r-k_r)}$ is designed artificially. To guarantee the error correction ability, each row of $\mathbf{Q}$ should contain at least two “1”. We consider a $8\times8$ RIS, the necessary number of information bits $k_r=\log_2(8\times8)=6$. The codeword length $n_r$ should thus be at least $n_r=10$~\cite{lin2004error}\footnote{Here, the Hamming code with 6 information bits and 4 redundant bits is not the classical Hamming code with the form (7, 4), (15,11), (31,26), etc. We do this extension since in practical systems, the antenna number may not exactly correspond to the number of information bits in classical Hamming code.}. We first randomly generate $\mathbf{Q}$ as 
\begin{equation}
	\mathbf{Q}=\begin{bmatrix}
		1 & 1 & 1 & 0 & 0 & 0\\
		1 & 0 & 0 & 1 & 1 & 0\\
		0 & 1 & 0 & 1 & 0 & 1\\
		0 & 0 & 1 & 0 & 1 & 1
	\end{bmatrix}^T.
\end{equation}
Based on the matrix $\mathbf{G_{\mathrm{Ham}}}$, we map the ideal beam pattern to the corresponding angular coverage range. The $n$-th column in $\mathcal{V}_r(1, :)$ is corresponding to the elevation angle $\theta=\cos^{-1}\left(\frac{1-N_{r_2}}{N_{r_2}}+\frac{2\mathrm{mod}\left(n, N_{r_2}\right)}{N_{r_2}}\right)$ and the azimuth angle $\phi=\sin^{-1}\left[\left(-\frac{N_{r_1}+1}{N_{r_1}}+\frac{2\lceil\frac{n}{N_{r_1}}\rceil}{N_{r_1}}\right)/\sin(\theta)\right]$, which is elaborated in~\eqref{eq:Omega-r}. When doing the mapping, unlike filling the ideal beam pattern matrix, where the encoded bitstreams are filled by columns, we do the mapping by rows to make each codeword cover half angles in the entire space.
The mapping results are depicted in Fig.~\ref{fig:G-origin}. 
The first 6 beam patterns are the same as those in hierarchical beam training frameworks. For these beam patterns, the two dimension (i.e., $\phi$ and $\theta$) can be decoupled. For example, to generate beam pattern $\mathcal{V}_r(1, :)$, we can design a 1D beam $\mathbf{v}_{\phi}\in\mathbb{C}^{8\times1}$ that covers $\phi\in\left[0, \pi/2\right]$ and a 1D beam $\mathbf{v}_{\theta}\in\mathbb{C}^{8\times1}$ that covers $\theta\in\left[0, \pi\right]$. The 2D beam can be realized by codeword $\mathbf{v}=\mathbf{v}_{\phi}\otimes\mathbf{v}_{\theta}$. However, for the last 4 beam patterns (the redundant beam patterns for error correction), only the $7^{\mathrm{th}}$ beam pattern can be decoupled into two 1D beams, and the other 3 beam patterns cannot be decoupled since the $\phi$-axis and the $\theta$-axis are interwoven with each other. 

What leads to this interweave? Since the first 6 columns of $\mathbf{G}_{\mathrm{Ham}}$ is an identity matrix, we can actually view the first 6 beam patterns as the \textbf{basis} patterns. Therefore, for the $7^{\mathrm{th}}$ beam pattern, according to the first column of $\mathbf{Q}$, it is obtained by adding up the first three basis (i.e., $\left[\mathcal{V}_r(1, :)+\mathcal{V}_r(2, :)+\mathcal{V}_r(3, :)\right]_2$), where $\left[\cdot\right]_2$ denotes the $\mathrm{mod}$-2 arithmetic. Since the first three beam patterns are all consistent at $\theta$-axis and varying at $\phi$-axis, so the beam pattern $\mathcal{V}_r(7, :)$ can still be decoupled. However, for the $8^{\mathrm{th}}$ beam pattern, it is obtained by $\left[\mathcal{V}_r(1, :)+\mathcal{V}_r(4, :)+\mathcal{V}_r(5, :)\right]_2$. Since $\mathcal{V}_r(4, :)$ and $\mathcal{V}_r(5, :)$ are consistent at $\phi$-axis and varying at $\theta$-axis, they will interweave with $\mathcal{V}_r(1, :)$ and make $\mathcal{V}_r(8, :)$ unable to be decoupled. Similarly, $\mathcal{V}_r(9, :)$ and $\mathcal{V}_r(10, :)$ are also unable to be decoupled. 

The above analysis inspires us that if we need to decouple the redundant beam patterns, we need to design the matrix $\mathbf{Q}$ so that only the basis with the same consistency is added together. For the simplicity of description, beam patterns that are consistent at $\theta$-axis and varying at $\phi$-axis are defined as \textbf{Type I} pattern, while beam patterns that are consistent at $\phi$-axis and varying at $\theta$-axis are defined as \textbf{Type II} pattern. To guarantee the error correction ability, we have the following \textbf{Proposition~\ref{lemma1}}.
\begin{proposition}
	To guarantee the error correction ability at RIS, the number of RIS elements at each dimension should be strictly larger than $4$ and the number of redundant beam patterns for each dimension should be at least $3$.\footnote{This condition holds for all linear block code. For other codes like convolutional code, this condition can be further relaxed. In practical systems, we should choose proper coding method according to the system parameters.}
	\label{lemma1}
\end{proposition}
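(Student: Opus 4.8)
The plan is to recognize that, once the decoupling constraint is imposed, \textbf{Proposition~\ref{lemma1}} is really a statement about two independent one-dimensional linear block codes, so I would first reduce the claim to a per-dimension question. Concretely, the requirement that each redundant beam pattern combine only Type~I bases or only Type~II bases forces the parity part of the generator to be block-diagonal, i.e. $\mathbf{Q}=\mathrm{diag}(\mathbf{Q}_\phi,\mathbf{Q}_\theta)$ up to a permutation of rows. Hence the overall code is the direct sum of a $\phi$-code with $k_\phi=\log_2 N_{r_1}$ information (basis) patterns and $r_\phi$ redundant patterns, and a $\theta$-code with $k_\theta=\log_2 N_{r_2}$ and $r_\theta$, and its minimum distance equals the smaller of the two component distances. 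Since ``error correction ability'' here means correcting a single mis-selected beam-training bit, I would translate it into the algebraic condition that each component code has minimum Hamming distance at least $3$, and then prove the two numerical thresholds for one dimension, the other following by symmetry.

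Next I would characterise distance $\ge 3$ through the systematic structure. Writing the $\phi$-code in systematic form $[\mathbf{I}_{k_\phi}\mid\mathbf{Q}_\phi]$ with parity-check matrix $[\mathbf{Q}_\phi^{T}\mid\mathbf{I}_{r_\phi}]$, distance at least $3$ is equivalent to all columns of the parity-check matrix being nonzero and pairwise distinct. The identity block supplies the $r_\phi$ weight-one unit columns, while the $\mathbf{Q}_\phi^{T}$ block supplies the rows of $\mathbf{Q}_\phi$; distinctness from the unit columns forces every row of $\mathbf{Q}_\phi$ to have weight at least $2$ (recovering the ``at least two 1's per row'' requirement stated earlier), and mutual distinctness forces the $k_\phi$ rows to be distinct. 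I would then add the non-degeneracy requirement that every redundant pattern genuinely mixes at least two basis patterns, since a redundant pattern with a weight-one column of $\mathbf{Q}_\phi$ merely duplicates a basis pattern and wastes a time slot without strengthening the code; this imposes that every column of $\mathbf{Q}_\phi$ also has weight at least $2$.

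Finally I would extract the thresholds by a short counting argument on the admissible $\mathbf{Q}_\phi$. The number of distinct length-$r_\phi$ binary vectors of weight at least $2$ is $2^{r_\phi}-1-r_\phi$, so feasibility of $k_\phi$ distinct weight-$\ge2$ rows requires $2^{r_\phi}-1-r_\phi\ge k_\phi$; since $r_\phi=2$ yields only one such vector, any dimension carrying more than one information bit needs $r_\phi\ge 3$, giving the ``at least $3$ redundant patterns'' half of the claim. For the element-count threshold I would use the column constraint: if $k_\phi=2$, then forcing every nonzero column of $\mathbf{Q}_\phi$ to have weight $2$ over only two rows makes each such column equal to $(1,1)^{T}$, so the two rows coincide and distinctness fails for every $r_\phi$; hence $k_\phi\ge 3$, i.e. $N_{r_1}>4$. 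I would close by exhibiting the circulant witness with $k_\phi=3,\ r_\phi=3$ whose rows are the three weight-$2$ vectors $(1,1,0),(1,0,1),(0,1,1)$ — all rows and columns then having weight $2$ — to confirm that the thresholds are simultaneously achievable, matching the $\mathbf{Q}$ used in the $8\times8$ example.

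The step I expect to be the main obstacle is the precise justification of the strict inequality $N_{r_1}>4$ rather than $N_{r_1}\ge 4$: the bare distance-$3$ condition (row weight $\ge2$, distinct rows) already admits $k_\phi=2$ with $r_\phi=3$, so the strict threshold genuinely relies on the non-degeneracy (column-weight) requirement, and I would have to argue carefully that this requirement is forced by the beam-training semantics — a redundant pattern identical to a basis pattern provides only repetition, not Hamming-type correction — rather than being an arbitrary design choice. Pinning down this semantic-to-algebraic correspondence, and confirming that the direct-sum distance really equals the per-dimension minimum, are the delicate points; the remaining counting is routine.
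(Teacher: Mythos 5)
Your proposal is correct and follows essentially the same route as the paper: block-diagonalize $\mathbf{Q}$ into per-dimension submatrices, demand $d_{\mathrm{min}}\geq 3$ for each one-dimensional code (equivalently, rows of $\mathbf{Q}_{\mathrm{I}}$ of weight at least two and pairwise distinct), and then rule out $\log_2(N_{r_1})\leq 2$ and fewer than $3$ redundant columns by the same case analysis the paper performs. Your one substantive addition is making explicit that the strict bound $N_{r_1}>4$ hinges on the non-degeneracy requirement that no redundant pattern duplicate a basis pattern (a weight-one column of $\mathbf{Q}_{\mathrm{I}}$), which the paper invokes only implicitly as ``to avoid repeated beam pattern''; this is exactly the right place to be careful, since without that requirement a $(5,2)$ systematic code does achieve $d_{\mathrm{min}}=3$ and the threshold would fail as a pure coding-theoretic statement.
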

\begin{proof}
	For a block code such as Hamming code, the error correction ability is related to its minimum hamming distance $d_{\mathrm{min}}$, which is equal to the minimum Hamming weight of its nonzero codewords~\cite{lin2004error}. In order to correct $t$ bits errors, $d_{\mathrm{min}}$ should satisfy $d_{\mathrm{min}}\geq 2t+1$. In our framework, we hope the Hamming code can correct $1$ bit error, so $d_{\mathrm{min}}\geq 3$. As a result, for each row in $\mathbf{G}_{\mathrm{Ham}}$, we should have at least three “1”, which means each row of $\mathbf{Q}$ should have at least two “1”. As discussed in Section~\ref{sec:4.2}, the two types of beam patterns cannot co-exist in the same column of $\mathbf{Q}$, so $\mathbf{Q}$ can be written as a block matrix as 
	\begin{equation}
		\mathbf{Q}=\begin{bmatrix}
			\mathbf{Q}_{\mathrm{I}} & \mathbf{0}\\
			\mathbf{0} & \mathbf{Q}_{\mathrm{II}}
		\end{bmatrix},
	\end{equation}
where $\mathbf{Q}_{\mathrm{I}}$ and $\mathbf{Q}_{\mathrm{II}}$ denote the submatrix related to \textbf{Type I} pattern and \textbf{Type II} pattern, respectively. Therefore, $\mathbf{Q}_{\mathrm{I}}$ and $\mathbf{Q}_{\mathrm{II}}$ should both have at least two “1”. Since \textbf{Type I} pattern and \textbf{Type II} pattern are homogeneous, we will only discuss \textbf{Type I} pattern and $\mathbf{Q}_{\mathrm{I}}$ in the following discussion. 

	If $N_{r_1}\leq 4$, $\mathbf{Q}_{\mathrm{I}}$ has at most two rows. To avoid repeated beam pattern, $\mathbf{Q}_{\mathrm{I}}$ would only be $\left[1 \quad 1\right]^T$. In this case, the Hamming weights of first two rows of $\mathbf{G}_{\mathrm{Ham}}$ are only $2$, which means $d_{\mathrm{min}}=2$, and the Hamming code can no longer correct 1 bit error. As a result, the number of RIS elements at each dimension should be strictly larger than $4$.

	Next, we need to prove that the number of columns in $\mathbf{Q}_{\mathrm{I}}$ should be at least $3$. If $\mathbf{Q}_{\mathrm{I}}$ only has two columns, in order for $d_{\mathrm{min}}\geq3$, all columns in $\mathbf{Q}_{\mathrm{I}}$ should be “1”. In this case, if we calculate the difference of these two rows in $\mathbf{G}_{\mathrm{Ham}}$, we can get a codeword with Hamming weight $d_{\mathrm{min}}=2$. As a result, the number of columns in $\mathbf{Q}_{\mathrm{I}}$ should be at least $3$. Similarly, the number of columns in $\mathbf{Q}_{\mathrm{II}}$ should also be at least $3$, which completes the proof.
\end{proof}

Based on the above analyses, we now introduce the steps of the proposed dimension reduced encoder design scheme. Since RIS is equipped with $N_r=N_{r_1}\times N_{r_2}$ antenna elements, we have $\log_2(N_{r_1})$ \textbf{Type I} patterns and $\log_2(N_{r_2})$ \textbf{Type II} patterns. We denote the number of redundant beam patterns for \textbf{Type I} patterns and \textbf{Type II} patterns as $m_{r_1}$ and $m_{r_2}$ respectively, then they should satisfy
\begin{equation}
	\left\lbrace
	\begin{aligned}
		m_{r_1} = \max\left\lbrace3, m_{r_1, \mathrm{int}}\right\rbrace\\
		m_{r_2} = \max\left\lbrace3, m_{r_2, \mathrm{int}}\right\rbrace
	\end{aligned}
	\right.,
\end{equation}
where $m_{r_1, \mathrm{int}}$ denotes the minimum integer that satisfies $2^{m_{r_1, \mathrm{int}}}-m_{r_1, \mathrm{int}}-1\geq \log_2(N_{r_1})$ and $m_{r_2, \mathrm{int}}$ denotes the minimum integer that satisfies $2^{m_{r_2, \mathrm{int}}}-m_{r_2, \mathrm{int}}-1\geq \log_2(N_{r_2})$~\cite{lin2004error}. For redundant beam patterns corresponding to \textbf{Type I} patterns, each row of $\mathbf{Q}_{\mathrm{I}}\in\left\lbrace0, 1\right\rbrace^{\log_2(N_{r_1})\times m_{r_1}}$ should be composed of $m_{r_1}$-tuples of weight 2 or more. There are a total of $\sum_{i=2}^{m_{r_1}}C(m_{r_1}, i)=2^{m_{r_1}}-m_{r_1}-1$ types of $m_{r_1}$-tuples of weight 2 or more, so we can always fill $\mathbf{Q}_{\mathrm{I}}$ without repeating existing tuples. Meanwhile, $\mathbf{Q}_{\mathrm{II}}\in\left\lbrace0, 1\right\rbrace^{\log_2(N_{r_2})\times m_{r_2}}$ can be generated by the same way. Finally, the submatrix $\mathbf{Q}$ can be composed by 
\begin{equation}
	\mathbf{Q}=\begin{bmatrix}
		\mathbf{Q}_{\mathrm{I}}	&	\mathbf{0}_{\log_2(N_{r_1})\times m_{r_2}}	\\
		\mathbf{0}_{\log_2(N_{r_2})\times m_{r_1}}	&	\mathbf{Q}_{\mathrm{II}}
	\end{bmatrix},
\end{equation}
where $\mathbf{0}_{\iota \times \gamma}$ denotes the all-zero matrix with dimension $\iota\times\gamma$.

\begin{figure}[t!]
	\centering
	\includegraphics[width=0.9\linewidth]{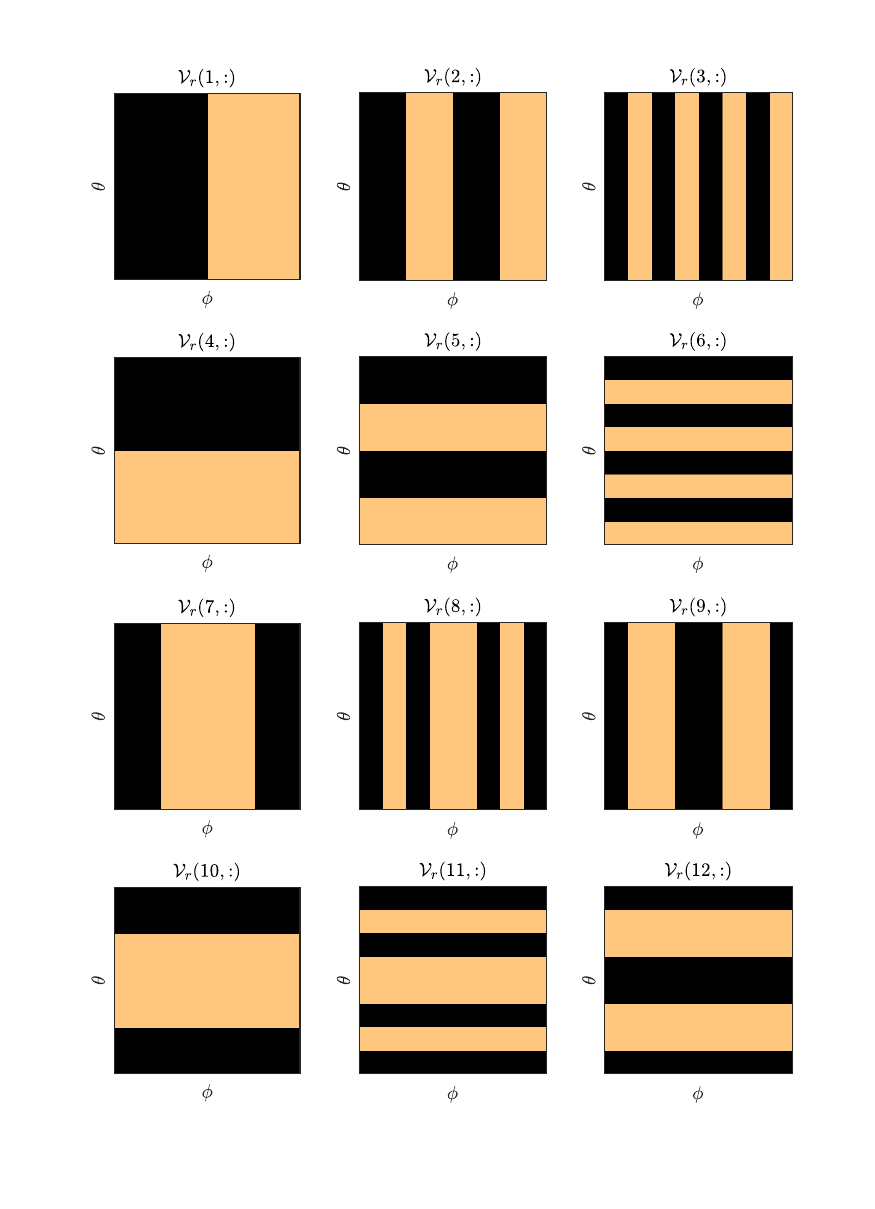}
	\vspace{-35pt}
	\caption{Angular coverage range corresponding to the proposed encoder.}
	\label{fig:G-reduced}
\end{figure}

With the proposed scheme, now we get back to the above example where the RIS is equipped with $8\times8$ elements. In this case, $m_{r_1}=m_{r_2}=3$, and submatrix $\mathbf{Q}$ can be generated as 
\begin{equation}
	\mathbf{Q}=\left[\begin{array}{ccc:ccc}
		1 & 1 & 0 & 0 & 0 & 0 \\
		1 & 0 & 1 & 0 & 0 & 0 \\
		0 & 1 & 1 & 0 & 0 & 0 \\
		\hdashline
		0 & 0 & 0 & 1 & 1 & 0 \\
		0 & 0 & 0 & 1 & 0 & 1 \\
		0 & 0 & 0 & 0 & 1 & 1 \\
	\end{array}\right].
\end{equation}
Then, the $n_r=k_r+m_{r_1}+m_{r_2}$ beam patterns $\mathcal{V}_r$ are depicted in Fig.~\ref{fig:G-reduced}. Through the proposed dimension reduced encoder design scheme, although we need two more redundant beam patterns, the two dimensions of RIS are properly decoupled and the quality of generated beam shape can be guarenteed. 

\subsection{The Decoding Procedure}

Based on the designed $\mathbf{Q}$, the check matrix $\mathbf{H}$ can be expressed as 
\begin{equation}
	\mathbf{H}=\begin{bmatrix}
		\mathbf{Q}^T & \mathbf{I}_{m_{r_1}+m_{r_2}}
	\end{bmatrix},
\end{equation}
based on which we can determine whether the received codeword $\hat{\mathbf{x}}_r$ contains error bits by calculating the syndrome $\mathbf{c}_r$ as 
\begin{equation}
	\mathbf{c}_r=\hat{\mathbf{x}}_r\mathbf{H}^T.
\end{equation}
When all bits in $\mathbf{c}_r$ equal to zero, the received $\hat{\mathbf{x}}_r$ is a normal codeword and there is no error. On the contrary, when $\mathbf{c}_r\neq \mathbf{0}_{1\times (m_{r_1}+m_{r_2})}$, $\hat{\mathbf{x}}_r$ is not a normal codeword generated by the designed codebook and there exists error in $\hat{\mathbf{x}}_r$. For the considered case here, when the first bit of $\hat{\mathbf{x}}_r$ is wrong, $\mathbf{c}_r=\left[1, 1, 0, 0, 0, 0\right]$. In other words, when $\mathbf{c}_r=\left[1, 1, 0, 0, 0, 0\right]$, we know the first bit of $\hat{\mathbf{x}}_r$ is wrong. After flipping the wrong bit, the bitstream $\hat{\mathbf{u}}_r$ can be estimated by extracting the first $k_r$ bits of $\hat{\mathbf{x}}_r$. Since $d_{\mathrm{min}}=3$, all 1-bit error has a unique syndrome and can be corrected. In addition, when there exists 1-bit error in \textbf{Type I} pattern and 1-bit error in \textbf{Type II} pattern simultaneously, these 2-bit errors can also be corrected according to the following \textbf{Proposition~\ref{lemma2}}.
\begin{proposition}
	By the proposed dimension reduced encoder design scheme, the error in \textbf{Type I} patterns and the error in \textbf{Type II} patterns are independent with each other. 
	\label{lemma2}
\end{proposition}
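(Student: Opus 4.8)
The plan is to exploit the block-diagonal structure of the generator matrix $\mathbf{G}_{\mathrm{Ham}}=\begin{bmatrix}\mathbf{I}_{k_r} & \mathbf{Q}\end{bmatrix}$ established by the proposed dimension reduced encoder design scheme, where $\mathbf{Q}=\begin{bmatrix}\mathbf{Q}_{\mathrm{I}} & \mathbf{0}\\ \mathbf{0} & \mathbf{Q}_{\mathrm{II}}\end{bmatrix}$. The key observation is that this block structure propagates to the check matrix $\mathbf{H}=\begin{bmatrix}\mathbf{Q}^T & \mathbf{I}_{m_{r_1}+m_{r_2}}\end{bmatrix}$, and hence to the syndrome computation $\mathbf{c}_r=\hat{\mathbf{x}}_r\mathbf{H}^T$. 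First I would partition the received codeword $\hat{\mathbf{x}}_r$ into the bits associated with \textbf{Type I} patterns and the bits associated with \textbf{Type II} patterns, together with their respective redundant (parity) bits. Because the off-diagonal blocks of $\mathbf{Q}$ are zero, the parity-check equations involving the \textbf{Type I} information bits only couple to the $m_{r_1}$ redundant bits of the first block, and symmetrically for \textbf{Type II}.

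The central step is to show that the syndrome vector $\mathbf{c}_r$ itself splits as a concatenation $\mathbf{c}_r=[\mathbf{c}_{r,\mathrm{I}}, \mathbf{c}_{r,\mathrm{II}}]$, where $\mathbf{c}_{r,\mathrm{I}}$ depends only on errors occurring in the \textbf{Type I} portion of the codeword and $\mathbf{c}_{r,\mathrm{II}}$ depends only on errors in the \textbf{Type II} portion. I would verify this by writing the error pattern as $\mathbf{e}=[\mathbf{e}_{\mathrm{I}}, \mathbf{e}_{\mathrm{II}}]$ and computing $\mathbf{c}_r=\mathbf{e}\mathbf{H}^T$ blockwise, using $\mathbf{H}^T=\begin{bmatrix}\mathbf{Q}\\ \mathbf{I}\end{bmatrix}$ and the vanishing off-diagonal blocks of $\mathbf{Q}$. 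This yields $\mathbf{c}_{r,\mathrm{I}}=\mathbf{e}_{\mathrm{I}}\mathbf{H}_{\mathrm{I}}^T$ and $\mathbf{c}_{r,\mathrm{II}}=\mathbf{e}_{\mathrm{II}}\mathbf{H}_{\mathrm{II}}^T$, with $\mathbf{H}_{\mathrm{I}}$ and $\mathbf{H}_{\mathrm{II}}$ the check matrices of the two constituent sub-codes. The independence claim then follows immediately: a $1$-bit error confined to \textbf{Type I} leaves $\mathbf{c}_{r,\mathrm{II}}=\mathbf{0}$ and vice versa, so the two syndrome segments can be decoded separately and a simultaneous \textbf{Type I} and \textbf{Type II} error produces the union of the two individual nonzero syndromes without interference.

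I would close the argument by connecting this algebraic decoupling back to the beam-pattern interpretation from Section~\ref{sec:4.2}: since each redundant \textbf{Type I} pattern is a $\mathrm{mod}$-$2$ sum of \textbf{Type I} basis patterns only (varying along the $\phi$-axis, consistent along the $\theta$-axis) and likewise for \textbf{Type II}, the error-correction mechanism along the azimuth dimension never mixes with the elevation dimension. Consequently the overall code behaves as the direct sum of two independent Hamming codes, each with $d_{\mathrm{min}}\geq 3$ by \textbf{Proposition~\ref{lemma1}}, so each can independently correct one error, giving the stated simultaneous $2$-bit correction capability.

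The main obstacle I anticipate is being careful about the indexing and bookkeeping when partitioning $\hat{\mathbf{x}}_r$, because the codeword concatenates information bits and parity bits in a specific order, and one must confirm that the identity block $\mathbf{I}_{m_{r_1}+m_{r_2}}$ in $\mathbf{H}$ also respects the $\mathbf{I}_{m_{r_1}}\oplus\mathbf{I}_{m_{r_2}}$ split so that the redundant bits are correctly attributed to their own dimension. Once the blockwise alignment of information bits, parity bits, and the corresponding rows/columns of $\mathbf{H}$ is pinned down, the vanishing of the cross terms is a direct consequence of the off-diagonal zeros and the rest is routine.
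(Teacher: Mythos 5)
Your proposal is correct and follows essentially the same route as the paper: both exploit the block-diagonal structure of $\mathbf{Q}$ to show that the syndrome $\mathbf{c}_r=\hat{\mathbf{x}}_r\mathbf{H}^T$ splits into $[\mathbf{c}_{r,\mathrm{I}},\,\mathbf{c}_{r,\mathrm{II}}]$ with each segment depending only on the errors in its own type of pattern. The paper carries out the computation directly on the received codeword $\hat{\mathbf{x}}_r=[\hat{\mathbf{u}}_{r,\mathrm{I}},\hat{\mathbf{u}}_{r,\mathrm{II}},\widehat{\mathbf{u}_{r,\mathrm{I}}\mathbf{Q}_{\mathrm{I}}},\widehat{\mathbf{u}_{r,\mathrm{II}}\mathbf{Q}_{\mathrm{II}}}]$ rather than via an explicit error pattern $\mathbf{e}$, but the two formulations are equivalent.
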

\begin{proof}
	For a certain information bit stream $\mathbf{u}_r^{(j)}$, it can be divided into two parts: the bits corresponding to the $\phi$ dimension with length $\log_2(N_{r_1})$, denoted as $\mathbf{u}_{r, \mathrm{I}}^{(i)}$, and the bits corresponding to the $\theta$ dimension with length $\log_2(N_{r_2})$, denoted as $\mathbf{u}_{r, \mathrm{II}}^{(i)}$. The generated codeword $\mathbf{x}_r^{(i)}$ can then be derived as 
	\begin{equation}
		\begin{aligned}
			\mathbf{x}_r^{(i)}&=\begin{bmatrix}
				\mathbf{u}_{r, \mathrm{I}}^{(i)} & \mathbf{u}_{r, \mathrm{II}}^{(i)}
			\end{bmatrix}\begin{bmatrix}
				\mathbf{I}_{\log_2(N_{r_1})} & \mathbf{0} & \mathbf{Q}_{\mathrm{I}} & \mathbf{0} \\
				\mathbf{0} & \mathbf{I}_{\log_2(N_{r_2})} & \mathbf{0} & \mathbf{Q}_{\mathrm{II}}
			\end{bmatrix}\\
			&=\begin{bmatrix}
				\mathbf{u}_{r, \mathrm{I}}^{(i)} & \mathbf{u}_{r, \mathrm{II}}^{(i)} & \mathbf{u}_{r, \mathrm{I}}^{(i)}\mathbf{Q}_{\mathrm{I}} & \mathbf{u}_{r, \mathrm{II}}^{(i)}\mathbf{Q}_{\mathrm{II}}
			\end{bmatrix}.
		\end{aligned}
	\end{equation}
After the transmission, we denote the received codeword as 
\begin{equation}
	\hat{\mathbf{x}}_r^{(i)}=\begin{bmatrix}
		\hat{\mathbf{u}_{r, \mathrm{I}}^{(i)}} & \hat{\mathbf{u}_{r, \mathrm{II}}^{(i)}} & \hat{\mathbf{u}_{r, \mathrm{I}}^{(i)}\mathbf{Q}_{\mathrm{I}}} & \hat{\mathbf{u}_{r, \mathrm{II}}^{(i)}\mathbf{Q}_{\mathrm{II}}}
	\end{bmatrix}.
\end{equation}
Then, the syndrome $\mathbf{c}_r$ can be derived as 
\begin{equation}
	\begin{aligned}
	\mathbf{c}_r&=\hat{\mathbf{x}}_r^{(i)}\mathbf{H}^T=\begin{bmatrix}
		\mathbf{c}_{r, \mathrm{I}} & \mathbf{c}_{r, \mathrm{II}}
	\end{bmatrix}\\
	&=\begin{bmatrix}
		\hat{\mathbf{u}_{r, \mathrm{I}}^{(i)}} & \hat{\mathbf{u}_{r, \mathrm{II}}^{(i)}} & \hat{\mathbf{u}_{r, \mathrm{I}}^{(i)}\mathbf{Q}_{\mathrm{I}}} & \hat{\mathbf{u}_{r, \mathrm{II}}^{(i)}\mathbf{Q}_{\mathrm{II}}}
	\end{bmatrix}\begin{bmatrix}
			\mathbf{Q}_{\mathrm{I}} & \mathbf{0} \\
			\mathbf{0} & \mathbf{Q}_{\mathrm{II}} \\
			\mathbf{I}_{m_{r_1}} & \mathbf{0} \\
			\mathbf{0} & \mathbf{I}_{m_{r_2}}
	\end{bmatrix}\\
	&=\begin{bmatrix}
		\hat{\mathbf{u}_{r, \mathrm{I}}^{(i)}}\mathbf{Q}_{\mathrm{I}}+\hat{\mathbf{u}_{r, \mathrm{I}}^{(i)}\mathbf{Q}_{\mathrm{I}}} & \hat{\mathbf{u}_{r, \mathrm{II}}^{(i)}}\mathbf{Q}_{\mathrm{II}}+\hat{\mathbf{u}_{r, \mathrm{II}}^{(i)}\mathbf{Q}_{\mathrm{II}}}
	\end{bmatrix}.
\end{aligned}
\label{eq:syndrome}
\end{equation}
We can see from \eqref{eq:syndrome} that $\mathbf{c}_r$ has two parts, and for the first part $\mathbf{c}_{r, \mathrm{I}}$, it is only related to the error happened to \textbf{Type I} patterns, and for the second part $\mathbf{c}_{r, \mathrm{II}}$, it is only related to the error happened to \textbf{Type II} patterns, which completes the proof.
\end{proof}

From the above analyses, the proposed dimension reduced encoder design scheme can not only improve the quality of beam shape by enabling the coupling of two dimensions of RIS, but also enhance the error correction capability of traditional Hamming code, thus further improving the beam training accuracy. 

\subsection{Beam Training Overhead Analysis}

\begin{table}[t!]
	\centering
	\caption{Beam Training Overheads for Different Frameworks}
	\begin{tabular}{|c|c|}
	\hline
		\textbf{Frameworks} & \textbf{Training Overheads} \\ \hline
		Exhaustive beam training & $N_tN_r$ \\ \hline
		Hierarchical beam training & $4\max\left\lbrace\log_2(N_t), \log_2(N_r)\right\rbrace$ \\ \hline
		Coded beam training & $4\max\left\lbrace n_t, n_r\right\rbrace$ \\ 
		\hline
	\end{tabular}
	\label{tb:overhead}
\end{table}

In this subsection, we will analyze the necessary beam training overheads of the traditional exhaustive beam training framework and traditional hierarchical beam training framework and compare them with that of the proposed coded beam training framework. The results are listed in Table~\ref{tb:overhead}.

Specifically, for the traditional exhaustive beam training framework, each possible beam tuple in space should be sequentially explored before determining the best beam tuple. Given the fact that the number of candidate narrow beams is equal to the number of antenna elements, the total beam training overhead of the exhaustive beam training framework should be $N_tN_r$. In this case, when the number of RIS elements is large, an unacceptable beam training overhead will severely limit the system performance. On the other hand, for the traditional hierarchical beam training framework, we need $2\times2=4$ beams at each layer. Since the numbers of layers at BS and RIS are $\log_2(N_t)$ and $\log_2(N_r)$, respectively, the total beam training overhead is $4\max\left\lbrace\log_2(N_t), \log_2(N_r)\right\rbrace$. Through the hierarchical beam training framework, a lot of incorrect angles are excluded at low layers, so the number of necessary beam training overhead is greatly reduced. 

For the proposed coded beam training framework, the codewords are composed of basis patterns and the redundant patterns. Similarly, we need four beams in each layer. Therefore, the necessary beam training overhead for our proposed framework is $4\max\left\lbrace n_t, n_r\right\rbrace$, where $n_t=\log_2(N_t)+m_t$ and $n_r=\log_2(N_r)+m_r$. Since $m$ is the minimum integer that satisfies $2^m-m-1\leq\log_2(N)$, when $N\leq3$, $m_r\leq \log_2(N)$, which means that the proposed scheme will not introduce a large extra beam training overhead compared to hierarchical beam training framework. 

\subsection{Scalability Discussion}

The proposed coded beam training framework is inherently suitable for multi-user scenarios. During the beam training process, the codebook of a certain layer does not depend on the result of the last layer, so no feedback is needed before finishing the beam training. When there are multiple users in the system, the BS transmits different beams sequentially. Each UE measured the received signal of each codeword and conduct the decoding procedure respectively. After transmitting all codewords, each UE obtains its own direction and feeds the result back to the BS sequentially, which can be exploited to generate directional beams for data transmission.

For multi-path scenarios, the proposed framework can only select the path with the strongest power. This is because when two paths are corresponding to two different angular ranges, we still only choose one range at each layer. However, this is not a very big problem in high-frequency systems, since the number of paths is small and the communcation tends to be dominated by the LoS path or the strongest NLoS path. From this perspective, our proposed scheme can still guarantee a relatively good communication performance by estimating the direction of the strongest path.

Furthermore, since the proposed framework can obtain the large-scale channel information such as the AoA and AoD effectively, it can potentially facilitate some channel estimation schemes such as~\cite{zhi2022ris,zhi2022two}, where the uplink pilots are designed according to the large-scale information to improve the SNR during channel estimation.

The proposed framework can also be extended to active RIS assisted communication systems~\cite{zhang2022active,zhi2022active}. For active RIS, since RIS elements can not only manipulate the phase shift, but also manipulate the amplitude, the codeword design will become easier. What's more, thanks to the capability to enhance the reflected signal, the proposed framework can realize accurate beam training over a longer distance.

In practical systems, the locations of BS and RIS tend to be fixed. Based on this assumption, we can also only conduct beam training for RIS-UE link and further reduce the necessary beam training overhead. If the UE has a high mobility, we can also improve the beam training accuracy by exploiting deep learning~\cite{xu2022time}.

\section{Simulation Results}

\begin{table}[t!]
	\centering
	\caption{Simulation Parameters}
	\begin{tabular}{|c|c|}
	\hline
		BS antenna number $N_t$ & 64 \\ \hline
		RIS antenna number $N_{r_1}\times N_{r_2}=N_r$ & $16\times16=256$ \\ \hline
		Central frequency $f_c$ & $28$ GHz \\ \hline
		The distribution of $\phi, \theta$ & $\mathcal{U}(-\pi, \pi)$ \\ \hline
		Threshold $\Delta$ & $0.3$ \\
		\hline
	\end{tabular}
	\label{tb:parameters}
\end{table}

In this section, we evaluate the performance of the proposed coded beam training framework through numerical experiments. The simulation parameters are listed in \textbf{Table~\ref{tb:parameters}} The antenna spacing is set to $d=\frac{c}{2f_c}$. We compare the achievable rate performance of the proposed coded beam training framework with both the traditional exhaustive beam training framework and the traditional hierarchical beam training framework. The achievable rate is obtained by 
\begin{equation}
	R=\log_2\left(1+\frac{P_t\alpha^2}{\sigma^2}\mathbf{h}_r\mathrm{diag}(\mathbf{v})\mathbf{G}\mathbf{w}\mathbf{w}^H\mathbf{G}^H\mathrm{diag}(\mathbf{v}^H)\mathbf{h}_r^H\right),
	\label{eq:rate}
\end{equation}
where $P_t$ denotes the transmission power at BS, $\alpha=\alpha^G\alpha^r$ is the product of BS-RIS path gain and RIS-UE path gain and $\sigma^2$ denotes the noise power. The reflecting vector of RIS $\mathbf{v}$ and the beamforming vector of BS $\mathbf{w}$ are both determined through the corresponding beam training frameworks. During beam training, the SNR is defined as $\frac{P_t\alpha^2}{\sigma^2}$, which implies the transmission power, the path gain, and the noise power. For a certain SNR, we normalize the channel matrix to eliminate the impact of transmission power and distance and add an AWGN to the received signal based on the specific SNR. Here, we do not utilized the ratio of received signal power to received noise power as SNR since the beamforming gain of the codewords also affect the performance of different beam training frameworks. If we set the ratio of received singal power to received noise power as a fixed value, the impact of beamforming gain is eliminated, which means no matter how badly our codeword is generated, the SNR is the same, which cannot reflect the performance of different frameworks. On the contrary, when evaluating the achievable rate, we choose a fixed $\frac{P_t\alpha^2}{\sigma^2}$ for all simulations, which is set as 10. This is because we want to merely evaluate the beam training performance by the beamforming gain in~\eqref{eq:rate}.

\begin{figure}[t!]
	\centering
	\includegraphics[width=1\linewidth]{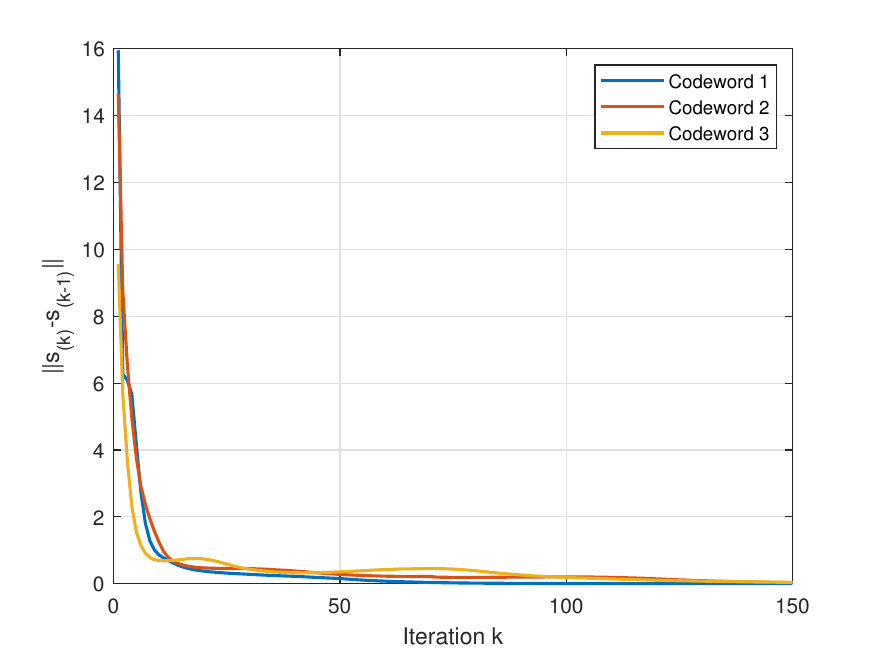}
	\caption{Convergence of the proposed GS-based codeword design scheme.}
	\label{fig:convergence}
\end{figure}

First, we need to determine the number of iterations for the proposed relaxed GS-based codeword design scheme. As illustrated in Fig.~\ref{fig:convergence}, we select three different codewords in the codebook. The convergence is evaluate by $||\mathbf{s}_{(k)}-\mathbf{s}_{(k-1)}||$, which is the difference of the generated beam between the $k$-th round and the $(k-1)$-th round of iterations. We can see that for all codewords, the difference $||\mathbf{s}_{(k)}-\mathbf{s}_{(k-1)}||$ drops fast and approaches 0 at around the $15$-th round. There are also some slight difference in the convergence speed among different codewords, which results from both the random initial phase and the intended beam pattern. In the simulation, we set $K_{\mathrm{iter}}=100$ to guarantee that we have obtained the proper codeword for beam training.

\begin{figure}[t!]
	\centering
	\includegraphics[width=1\linewidth]{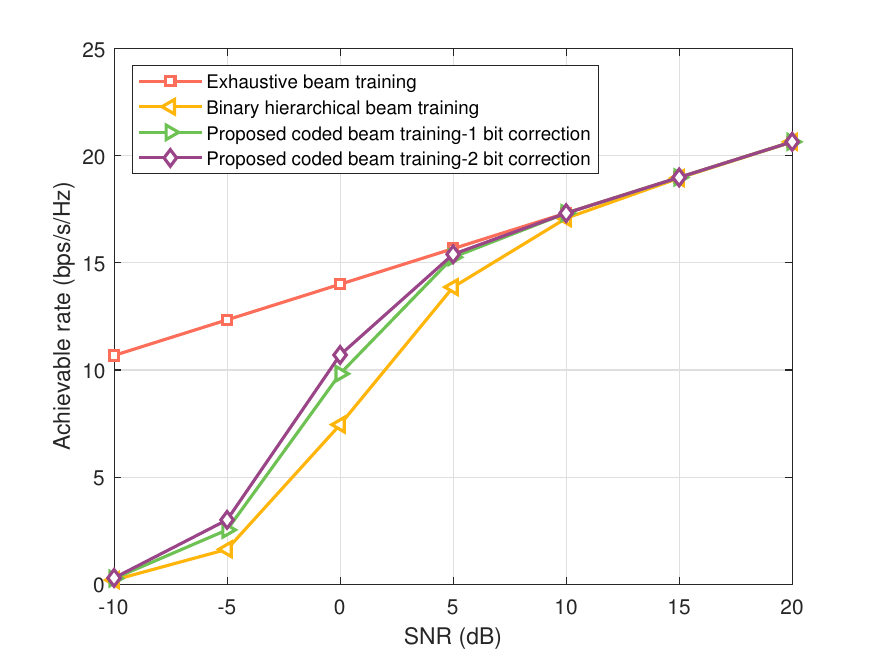}
	\caption{Achievable rate performance v.s. SNR.}
	\label{fig:rate-SNR}
\end{figure}

Fig.~\ref{fig:rate-SNR} depicts the achievable rate performance of different beam training frameworks against the SNR. We assume that the beam training overheads for all frameworks are all sufficient. In this case, the traditional exhaustive beam training framework can always detect the best beam tuples for BS and RIS thanks to the high beamforming gain realized by the large antenna array. We can observe that compared to the traditional hierarchical framework, the proposed coded beam training framework can realize a higher achievable rate performance in low SNR scenarios. The suffix “\textit{1 bit correction}” means that we only utilize the check matrix $\mathbf{H}$ to correct 1-bit error in received codewords as traditional Hamming code, while the suffix “\textit{2 bit correction}” means that we exploit the property of the designed encoder to enable some 2-bit errors to be corrected. We can see that since the proposed dimension reduced encoder can decouple the two dimensions of RIS, the error correction capability can also be enhanced compared to traditional Hamming code. In addition, when the SNR is very low, like -10 dB, the proposed framework coincides with other frameworks. This is because in this case, there are more than 2 bits of errors in the system and our proposed framework cannot correct that many errors. 

\begin{figure}[t!]
	\centering
	\includegraphics[width=1\linewidth]{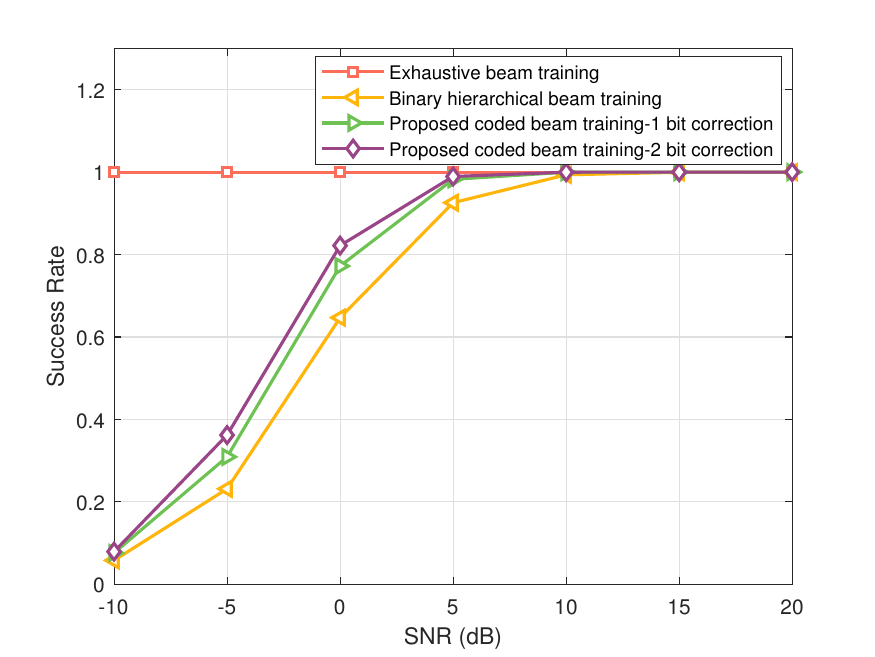}
	\caption{Success rate v.s. SNR.}
	\label{fig:srate-SNR}
\end{figure}

To evaluate the probability of different frameworks to select the best tuple, we compare the success rate of different frameworks against SNR in Fig.~\ref{fig:srate-SNR}. Here, we also assume that the beam training overhead for all frameworks are all sufficient. Similar to the achievable rate performace, the proposed coded beam training framework are more likely to detect the best tuple for BS and RIS successfully compared to traditional hierarchical beam training framework thanks to the error correction capability brought by the encoding-decoding process. In addition, through the decoupling ability enabled by the proposed dimension reduced encoder design scheme, the proposed framework embraces a higher success rate compared to the framework based on traditional Hamming code. 

\begin{figure}[t!]
	\centering
	\includegraphics[width=1\linewidth]{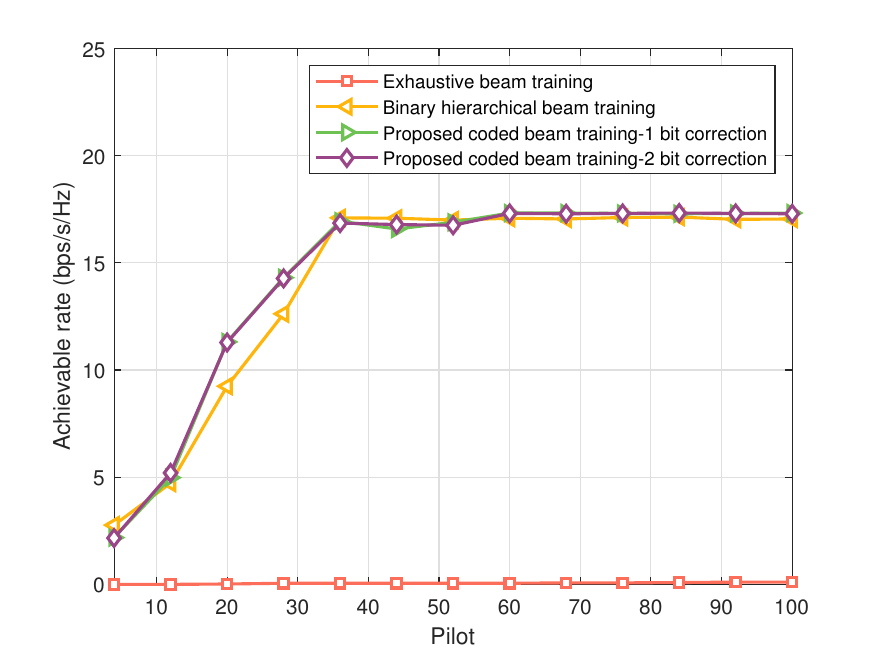}
	\caption{Achievable rate performance v.s. beam training overhead, $P\in\left[4, 100\right]$.}
	\label{fig:rate-pilot-low}
\end{figure}

Furthermore, to reveal the impact of beam training overhead on different frameworks, we compare the achievable rate performance of different beam training frameworks against the number of pilot overheads in Fig.~\ref{fig:rate-pilot-low}. The beam training SNR is set to $10$ dB and the pilot overhead is increasing from $4$ to $100$. In our considered system, the necessary beam training overhead for the traditional hierarchical beam training framework should be $4\max\left\lbrace\log_2(N_t), \log_2(N_r)\right\rbrace=4\times8=32$. The necessary beam training overhead for the proposed coded beam training framework is $4\max\left\lbrace n_t, n_r\right\rbrace=4\times14=56$. We can observe that when the pilot number is insufficient for all frameworks, the proposed coded beam training framework still outperforms existing hierarchical beam training framework since it also have certain error correction capabilities. When the pilot number is sufficient for the hierarchical framework but insufficient for the proposed framework, the achievable rate of the proposed scheme is slightly lower than that of the hierarchical framework. This is because the redundant beams have not been entirely transmitted, so the error correction can sometimes be misleading. When the pilot number is sufficient for the proposed framework, it can reach the maximum achievable rate thanks to the error correction ability. We also notice that in this scenario, the trend of the “1 bit correction” and the “2 bit correction” is nearly the same, this is because when SNR$=10$ dB, both schemes can determine the best beam tuple, which is consistent with the results in Fig.~\ref{fig:rate-SNR} and Fig.~\ref{fig:srate-SNR}. The traditional exhaustive beam training framework, however, can barely detect the best beam tuple since the pilot number is severely insufficient.

\begin{figure}[t!]
	\centering
	\includegraphics[width=1\linewidth]{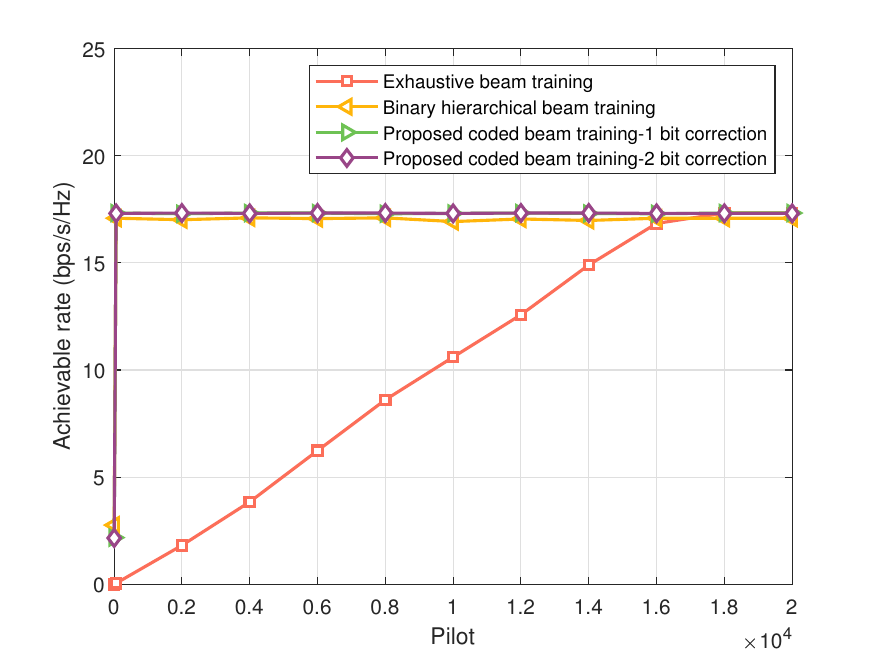}
	\caption{Achievable rate performance v.s. beam training overhead, $P\in\left[4, 20000\right]$.}
	\label{fig:rate-pilot-high}
\end{figure}

To demonstrate the performance improvement of the traditional exhaustive beam training framework more clearly, in Fig.~\ref{fig:rate-pilot-high}, the pilot overhead is increasing from $4$ to $20000$. In our considered scenario, the necessary beam training overhead for the traditional exhaustive beam training framework should be $N_tN_r=16384$. From Fig.~\ref{fig:rate-pilot-high}, we can see that when the pilot number is below $16000$, the achievable rate of the exhaustive framework improves gradually. When the pilot number is around $16000$, the achievable rate is nearly the same as the hierarchical framework. When the pilot number is sufficient, the achievable rate is the same as the proposed coded beam training framework. From Fig.~\ref{fig:rate-pilot-low} and Fig.~\ref{fig:rate-pilot-high}, we can see that due to the error correction capability brought by the encoding-decoding process, the proposed coded beam training framework can outperform existing frameworks under different pilot numbers, which further verified the advantage of the proposed framework. 

\section{Conclusions}

In this paper, we exploited the error correction capability of channel coding to realize accurate beam training under low SNR. By mapping the angles in space to a bitstream, we enabled the encoding-decoding procedure during beam training. Then, considering the constant modulus constraints of RIS elements, we adopted a new codeword design criterion and proposed a relaxed GS-based codeword design scheme. Furthermore, we proposed a dimension reduced encoder design scheme to improve the quality of the beam shape and the capability of error correction simultaneously. Simulation results verified the effectiveness of the proposed scheme. The proposed framework revealed the similarity of intrinsic mathematical structures between channel coding and beam training, which enabled the error correction during beam training and provided a promising solution for accurate and reliable beam training in RIS systems. For future works, this coded beam training framework can be extended to more scenarios such as near-field scenarios. In addition, various channel coding methods can be applied to the proposed framework to enable reliable beam training under low SNR.

\bibliographystyle{IEEEtran}
\bibliography{CBT_RIS,IEEEabrv}
\end{document}